\newtheorem{thm}{\bf Theorem}[section]
\newtheorem{corollary}[thm]{\bf Corollary}
\newtheorem{theorem}[thm]{\bf Theorem}
\newtheorem{lemma}[thm]{\bf Lemma}
\newtheorem{proposition}[thm]{\bf Proposition}
\newtheorem{definition}[thm]{\bf Definition}
\newtheorem{remark}[thm]{\bf Remark}
\newtheorem{example}[thm]{\bf Example}
\title{Majorization  and Semi-Doubly Stochastic Operators on $L^1(X)$
}
\author{
  Seyed Mahmoud Manjegani, Shirin Moein \\
   Department of Mathematical Sciences,\\ Isfahan University of Technology, \\
  Isfahan, Iran\\
  \texttt{ manjgani@cc.iut.ac.ir} \\
  \texttt{ s.moein@math.iut.ac.ir} 
}
\begin{document}
\maketitle

\begin{abstract}
This article is devoted to a study of majorization based on semi-doubly stochastic operators (denoted by $S\mathcal{D}(L^1)$) on $L^1(X)$ when $X$ is a $\sigma$-finite measure space. We answered Mirsky's question and characterized the majorization by means of semi-doubly stochastic maps on $L^1(X)$. We collect some results of semi-doubly stochastic operators such as a strong relation of semi-doubly stochastic operators and integral stochastic operators, and relatively weakly compactness of $S_f=\{Sf: ~S\in S\mathcal{D}(L^1)\}$ when $f$ is a fixed element in $L^1(X)$ by proving equi-integrability of $S_f$. 
\end{abstract}


\section{Introduction}

 Until recent decades, the main attention in majorization theory was paid to finite-dimensional space, but recently because of its significant applications in a broad spectrum of fields, especially in quantum physics, considerable interest in infinite-dimensional spaces appeared mathematically and physically \cite{Semi, KAFTAL2, jadid, MM,  Rajesh}.\\
 The aim of this paper is to study the notion of majorization on $L^1(X,\mu)$, that is the space of all absolutely integrable function $f:X\to \mathbb{R}$ when $(X,\mu)$ is $\sigma$-finite measure space. 
Our motivation to work on this space is its application in quantum information theory, for details you can see \cite{MM, epsilon}. We start with short history.

  Hardy, Littlewood, and P{\'o}lya extended an equivalent condition of Muirhead's inequality from non-negative integer vectors to real vectors and called it vector majorization as follows. \\
Let $X,Y\in\mathbb{R}^n$ with the similar total of the whole components. $X$ is vector majorized by $Y$ (denoted by $X\prec Y$) if for each $k\in\{1,2,\ldots, n\}$ the sum of the $k$ largest components of $X$ is less than or equal to the sum of the $k$ largest components of $Y$. 

To avoid difficulty of decreasing rearrangement of components Hardy, Littlewood and P{\'o}lya proved  equivalent conditions for $X\prec Y$ independent of decreasing rearrangements as following theorem.
\begin{theorem}\label{majvect}\cite{Marshal}
Let $x,Y\in \mathbb{R}^n$. Then the following statements are equivalent.
\begin{itemize}
    \item[(1)] $X\prec Y$.
      \item[(2)] There exists a doubly stochastic matrix $D=[d_{ij}]$ (an $n$-square matrix with $d_{ij}>0$, $\sum_{i=1}^n d_{ij}=1$ and $\sum_{j=1}^n d_{ij}=1$ for all $i,j\in \{1,2,\ldots, n\}$) such that $X=DY$.
        \item[(3)] The following inequality hold for all convex functions $g$,
        $$\sum_{i=1}^n g(x_i)\leq \sum_{i=1}^n g(y_i).$$ 
        Here $x_i$ and $y_i$ for all $i=1,\cdots, n$ are components of $X$ and $Y$ respectively.
\end{itemize}
\end{theorem}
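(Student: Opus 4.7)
The plan is to prove the cyclic implication $(2) \Rightarrow (3) \Rightarrow (1) \Rightarrow (2)$, which is the classical route for the Hardy--Littlewood--P\'olya theorem. The first two steps are short and rest on Jensen's inequality and the choice of well-chosen test convex functions, while the implication $(1) \Rightarrow (2)$ will be the core of the argument.

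For $(2) \Rightarrow (3)$, I would observe that if $D=[d_{ij}]$ is doubly stochastic and $X=DY$, then each $x_i = \sum_j d_{ij} y_j$ is a convex combination of the $y_j$'s (since $d_{ij}\geq 0$ and $\sum_j d_{ij}=1$). Jensen's inequality yields $g(x_i)\leq \sum_j d_{ij} g(y_j)$ for every convex $g$, and summing over $i$ and using the column-stochastic property $\sum_i d_{ij}=1$ gives the desired inequality. For $(3) \Rightarrow (1)$, I would first apply $g(t)=t$ and $g(t)=-t$ (both convex) to recover $\sum x_i = \sum y_i$. Then, sorting the components of $Y$ in decreasing order $y_{[1]}\geq y_{[2]}\geq \cdots \geq y_{[n]}$, I would plug in the convex function $g_a(t) = (t-a)_+$ with $a=y_{[k]}$, and verify that $\sum_i (y_i - a)_+ = \sum_{i=1}^k (y_{[i]}-a)$ while $\sum_i (x_i-a)_+ \geq \sum_{\text{top }k}(x_{[i]}-a)$, producing the partial-sum inequality defining $X\prec Y$.

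The main obstacle is $(1) \Rightarrow (2)$, i.e.\ constructing a doubly stochastic matrix $D$ with $X=DY$ from only the partial-sum comparisons. My plan is to proceed by induction on the number of indices at which $X$ and $Y$ differ, using so-called T-transforms: matrices of the form $T_\lambda^{(j,k)} = \lambda I + (1-\lambda)P_{jk}$ with $P_{jk}$ the transposition of coordinates $j$ and $k$ and $\lambda\in[0,1]$. These are doubly stochastic, and products of doubly stochastic matrices are doubly stochastic. I would first assume without loss of generality that both $X$ and $Y$ are arranged in decreasing order, choose the smallest index $j$ with $y_j>x_j$ and the smallest $k>j$ with $y_k<x_k$ (these exist since the totals agree but $X\neq Y$), set $\varepsilon = \min(y_j-x_j,\ x_k - y_k)$, and apply a suitable $T_\lambda^{(j,k)}$ to $Y$ obtaining $Y'$ with $y'_j = y_j - \varepsilon$, $y'_k = y_k + \varepsilon$, and $y'_i=y_i$ otherwise. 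A direct check of partial sums shows $X\prec Y' \prec Y$ and that $Y'$ agrees with $X$ in at least one more coordinate than $Y$ does. Iterating this at most $n-1$ times produces $X = T_m \cdots T_1 Y$, so $D := T_m\cdots T_1$ is doubly stochastic with $X=DY$.

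The delicate point will be verifying that after one T-step the majorization $X \prec Y'$ still holds, since partial sums must be rechecked against the modified vector; this reduces to a case analysis on whether an index $\ell$ lies before $j$, between $j$ and $k$, or after $k$. An alternative I would keep in reserve is Birkhoff's theorem identifying the extreme points of the doubly stochastic matrices as permutation matrices, combined with a geometric argument that $X$ lies in the convex hull of the permutations of $Y$; but I find the T-transform route more concrete and better suited to the exposition that follows.
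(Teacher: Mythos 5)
This theorem is stated in the paper only as a quoted classical result with a citation to Marshall--Olkin--Arnold; the paper supplies no proof, so there is nothing internal to compare against. Your cyclic scheme $(2)\Rightarrow(3)\Rightarrow(1)\Rightarrow(2)$ is exactly the standard Hardy--Littlewood--P\'olya argument (Jensen plus the test functions $t$, $-t$, $(t-a)_+$, and then the T-transform induction of Muirhead/HLP, which is Lemma 2.B.1 in the cited reference), and the details you sketch all check out: in particular, for $j\le \ell<k$ the deficit $\sum_{i=j}^{\ell}(y_i-x_i)\ge y_j-x_j\ge\varepsilon$ by minimality of $k$, which is precisely what makes $X\prec Y'$ survive the T-step. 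Two small points to tidy up in a full write-up: after one T-step $Y'$ need not remain in decreasing order, so the induction should either re-sort (absorbing the permutation matrix into $D$) or be phrased, as you implicitly do, via the fact that the partial sums of $Y'$ in its given order are dominated by those of its decreasing rearrangement; and the paper's stated condition $d_{ij}>0$ is evidently a typo for $d_{ij}\ge 0$, since your construction (like every correct one) produces a $D$ with zero entries in general.
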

As an infinite counterpart of Hardy, Littlewood, and P{\'o}lya's results, Mirsky proposed this question\cite[Section 4: Page 328]{Mirsky}:
\begin{displayquote}
\textit{``The introduction of infinite doubly stochastic matrices raises the question whether there 
exists an infinite analogue 
of Hardy, Littlewood, and P{\'o}lya's results?''}    
\end{displayquote}

In this work, by using semi-doubly stochastic operators we completely answer Mirsky's question (extension of Hardy, Littlewood, and P{\'o}lya's results, Theorem \ref{majvect}) on $L^1(X,\mu)$ when $(X,\mu)$ is $\sigma$-finite measure space. 
We denote the dual space of $L^1 (X,\mu)$ which is the space of essentially bounded functions on $X$ by $L^\infty (X,\mu)$ and to shorten notation, we write $L^1(X)$ and $L^\infty(X)$ when no confusion can arise.

Section \ref{Sec2} contains a brief historical summary of results and mathematical preliminaries on majorization on $L^1(X)$. In Section \ref{Section2} we will look more closely at semi-doubly stochastic operators and provide a method of constructing them in Theorem \ref{SDSprop*} 
and also for fixed $f\in L^1(X)$ we will prove equi-integrability of $S_f=\{Sf:~~S\in S\mathcal{D}(L^1(X))\}$ in Theorem \ref{equiint} for arbitrary measure space $X$, and using this theorem leads to Corollary \ref{Wcompact} that is relatively weakly compactness of $S_f$ when $X$ is finite measure space. 
In Section \ref{Sec4} some of the recent results like relation between majorization and integral operators are
reviewed in a more general setting by using semi-doubly stochastic operators and we we will answer to Mirsky's question by giving a full characterization of majorization in Theorem \ref{main2}.

\section{Mathematical preliminaries}\label{Sec2}
There are two main ways for generalizing the majorization concept on measurable functions, extension based on decreasing rearrangement and based on stochastic operators. 
\subsection{Decreasing Rearrangement} \label{1}

In 1963, definition of decreasing rearrangement for measurable functions on $([0,1],m)$ where $m$ is the Lebesgue measure introduced by Ryff \cite{Ryff} and then in 1974 extended by Chong \cite{Chong} to $L^1(X,\mu)$ for an arbitrary measure space $(X,\mu)$ as follows.  
\begin{definition}
If $f$ is any measurable (respectively non-negative integrable) \linebreak function defined on a finite (respectively infinite) measure space $( X, \mu)$, then there exists a unique right continuous decreasing function $f^\downarrow$ on the interval $[0, \mu(X)]$, called the decreasing rearrangement of $f$ and defined by 
\begin{eqnarray*}
f^\downarrow (s)&=& \inf \{t: d_f(t)\leq  s\},\quad \quad\quad 0 \leq  s \leq \mu(X)\\
&=& \sup \{t: d_f(t)> s\}, \quad \quad\quad 0 \leq  s \leq \mu(X)
\end{eqnarray*}
where  $d_f$ is called distribution function of $f$ and for all real $t$ defined by 
\[d_f(t)= \mu\{x: f(x)>t\}.\]
\end{definition}
Then Chong generalized the notion of majorization as following definition to \linebreak $L^1(X,\mu)$ for an arbitrary measure space $X$. 
\begin{definition}\cite{Chong}\label{DecRe} Let $(X, \mu)$ be an arbitrary measure space and $f,g\in L^1(X)$ (notice that for the infinite measure space we have to suppose $f,g$ are non-negative). Then we say that $f$ is weak majorized by $g$ and write $f\prec_w g$ if 
\[
\int_0^s f^\downarrow dm \leq  \int_0^s g^\downarrow dm,\quad \forall\, 0\leq  s\leq  \mu(X)\]
and furthermore if we have the following equality we say that $f$ is majorized by $g$ and write $f\prec g$.
\[\int_0^{\mu(X)} f^\downarrow dm = \int_0^{\mu(X)} f^\downarrow dm, 
\] where $dm$ is the Lebesgue measure on the interval $[0,\infty)$.
\end{definition}
The following Theorems are due to Chong which we will use them in next section.
\begin{theorem}\label{Chong2}\cite[Theorem 1.6]{Chong} If $f,g\in L^1(X,\mu)$ are non-negative and $\mu(X)$  is infinite, then
 for all $u\in \mathbb{R}$
$$
\int_{0}^{t}f^{\downarrow}dm\leq\int_{0}^{t}g^{\downarrow}dm,\ \forall t\in[0,\infty),
$$
 if and only if
$$
\int_{u}^{\infty}d_{f}dm\leq\int_{u}^{\infty}d_{g}dm.
$$
\end{theorem}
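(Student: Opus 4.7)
The plan is to establish a Legendre--Fenchel style duality between the two quantities $\int_0^t f^\downarrow\,dm$ and $\int_u^\infty d_f\,dm$. Once this duality is in place, both implications of the equivalence fall out by passing an inequality through an $\inf$ or a $\sup$.

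First I would apply the layer cake formula to $f^\downarrow$ on $[0,t]$. Since $f^\downarrow$ is decreasing and equimeasurable with $f$, the superlevel set $\{s\geq 0 : f^\downarrow(s)>r\}$ equals $[0, d_f(r))$, which gives
\begin{equation*}
\int_0^t f^\downarrow(s)\,ds \;=\; \int_0^\infty m\bigl([0,t]\cap[0,d_f(r))\bigr)\,dr \;=\; \int_0^\infty \min\bigl(t, d_f(r)\bigr)\,dr.
\end{equation*}
Using the trivial bounds $\min(t,d_f(r))\leq t$ on $[0,u)$ and $\min(t,d_f(r))\leq d_f(r)$ on $[u,\infty)$ then yields
\begin{equation*}
\int_0^t f^\downarrow\,dm \;\leq\; tu + \int_u^\infty d_f\,dm \qquad\text{for all } u\geq 0,
\end{equation*}
with equality achieved at $u = f^\downarrow(t)$, where each of the two bounds becomes tight on its respective piece. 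This supplies the two dual formulas
\begin{equation*}
\int_0^t f^\downarrow\,dm \;=\; \inf_{u\geq 0}\Bigl[tu + \int_u^\infty d_f\,dm\Bigr], \qquad \int_u^\infty d_f\,dm \;=\; \sup_{t\geq 0}\Bigl[\int_0^t f^\downarrow\,dm - tu\Bigr],
\end{equation*}
the second being an immediate rearrangement of the first.

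With both dual representations in hand, the theorem is routine. For $(\Rightarrow)$, subtract $tu$ from the hypothesis $\int_0^t f^\downarrow\,dm \leq \int_0^t g^\downarrow\,dm$ and take $\sup_{t\geq 0}$ in the right-hand identity to obtain $\int_u^\infty d_f\,dm \leq \int_u^\infty d_g\,dm$. For $(\Leftarrow)$, add $tu$ to the hypothesis $\int_u^\infty d_f\,dm \leq \int_u^\infty d_g\,dm$ and take $\inf_{u\geq 0}$ in the left-hand identity to conclude $\int_0^t f^\downarrow\,dm \leq \int_0^t g^\downarrow\,dm$.

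The main technical point I anticipate is the equality case $u = f^\downarrow(t)$ of the inequality when $d_f$ has jumps or plateaus at the level $t$: one must check that $\int_0^\infty \min(t, d_f(r))\,dr = t\,f^\downarrow(t) + \int_{f^\downarrow(t)}^\infty d_f(r)\,dr$ under Chong's right-continuous convention $f^\downarrow(s) = \inf\{t : d_f(t)\leq s\}$. The exceptional set has Lebesgue measure zero, so the identity holds, but the bookkeeping deserves some care. Note also that the non-negativity of $f,g$ together with $f,g\in L^1$ is precisely what keeps $\int_u^\infty d_f\,dm$ finite for $u\geq 0$ in the infinite-measure setting; the range $u<0$ adds no content since both sides are then infinite.
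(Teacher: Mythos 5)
Your proof is correct, and it is worth noting that the paper itself offers no proof of this statement --- it is quoted verbatim as Theorem 1.6 of Chong's 1974 paper --- so the only comparison available is with Chong's original argument, which rests on essentially the same conjugacy between $t\mapsto\int_0^t f^\downarrow\,dm$ and $u\mapsto\int_X(f-u)^+\,d\mu=\int_u^\infty d_f\,dm$. Your layer-cake identity $\int_0^t f^\downarrow\,dm=\int_0^\infty\min\bigl(t,d_f(r)\bigr)\,dr$ is right, the upper bound $tu+\int_u^\infty d_f\,dm$ is right, and the equality case at $u=f^\downarrow(t)$ checks out under the right-continuous convention (for $r<f^\downarrow(t)$ one has $d_f(r)>t$, and for $r>f^\downarrow(t)$ one has $d_f(r)\leq t$, so the exceptional set is at most a point). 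The one place you are too quick is calling the second dual formula ``an immediate rearrangement of the first'': Legendre biconjugation returns the original function only because $u\mapsto\int_u^\infty d_f\,dm$ is convex (its slope $-d_f(u)$ is nondecreasing), and it is cleaner to verify the supremum formula directly by taking $t=d_f(u)$, for which $\min\bigl(d_f(u),d_f(r)\bigr)$ equals $d_f(u)$ on $[0,u]$ and $d_f(r)$ on $[u,\infty)$, giving exact equality; for $u=0$ the value $d_f(0)$ may be infinite and the supremum is only attained in the limit $t\to\infty$, where both sides equal $\|f\|_1$. With that step made explicit, passing the hypothesis through the $\sup$ and the $\inf$ gives both implications exactly as you say, and your observation that $u<0$ contributes nothing (both integrals being infinite there on an infinite measure space) correctly disposes of the ``for all $u\in\mathbb{R}$'' phrasing in the statement.
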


\begin{theorem}\cite[Corollary 1.2]{Chong}\label{Chong}
  If $f$ is a measurable function on $(X,\mu)$, where $\mu(X)$ is finite or infinite, then
$$
\int_{X}\max\{(f-u),0\}d\mu=\int_{u}^{\infty}d_{f}(t)dt.
$$
\end{theorem}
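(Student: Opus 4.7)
The plan is to express the positive part $\max\{f(x)-u,0\}$ as the Lebesgue measure of the interval $(u, f(x)]$ (when $f(x) > u$) and then swap the order of integration by Tonelli's theorem. The identity I would exploit pointwise is
\[
\max\{f(x)-u,0\} \;=\; \int_u^\infty \mathbf{1}_{\{f(x) > t\}}\, dt,
\]
which holds because, if $f(x) \leq u$, both sides vanish (the indicator is zero throughout $[u,\infty)$), and if $f(x) > u$, the right side equals $\int_u^{f(x)} 1\, dt = f(x)-u$.

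Having this, I would integrate both sides over $X$ with respect to $\mu$. Since the integrand $(x,t)\mapsto \mathbf{1}_{\{f(x) > t\}}$ is non-negative and measurable on the product space $X\times [u,\infty)$, Tonelli's theorem applies and allows me to exchange the order of integration:
\[
\int_X \max\{f(x)-u,0\}\, d\mu(x) \;=\; \int_X \!\int_u^\infty \mathbf{1}_{\{f(x) > t\}}\, dt\, d\mu(x) \;=\; \int_u^\infty \!\int_X \mathbf{1}_{\{f(x) > t\}}\, d\mu(x)\, dt.
\]
The inner integral is exactly $\mu(\{x : f(x) > t\}) = d_f(t)$, yielding the claimed equality.

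The only subtlety worth watching is measurability of the indicator function on the product, which follows from the measurability of $f$, together with the $\sigma$-finiteness needed to apply Tonelli; since the integrand is non-negative these hypotheses present no obstruction, and no assumption about finiteness of either side of the identity is needed (the equality holds in $[0,\infty]$). So the argument is essentially a one-line layer-cake computation, and I do not anticipate any serious obstacle.
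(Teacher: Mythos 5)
Your layer-cake argument is correct: the pointwise identity $\max\{f(x)-u,0\}=\int_u^\infty \mathbf{1}_{\{f(x)>t\}}\,dt$ holds, the set $\{(x,t):f(x)>t\}$ is product-measurable, and Tonelli (with $X$ $\sigma$-finite, as it is throughout the paper) gives the interchange, with both sides allowed to be $+\infty$. The paper itself offers no proof of this statement --- it is quoted as Corollary 1.2 of Chong's 1974 paper --- and your argument is the standard one for that result, so there is nothing substantive to compare; the only point to keep in view is the one you already flagged, namely that without $\sigma$-finiteness of $\mu$ you would replace Tonelli by a direct monotone-convergence argument (e.g.\ approximating the inner integral by sums over dyadic partitions of $[u,\infty)$), but in the ambient setting of the paper this is not needed.
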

\subsection{Semi-Doubly Stochastic operators}\label{2}
 To avoid the difficulty of decreasing rearrangement in the definition of majorization, Ryff \cite{Ryff} introduced and characterized an important class of linear operators $T:L^1([0,1],m)\to L^1([0,1],m)$ which is known as doubly stochastic operators such that $Tf$ is majorized by $f$ for all $f\in L^1([0,1],m)$.
 Ryff's characterization can not be extended on $L^1(X, \mu)$ when $(X, \mu)$ is $\sigma$-finite measure space(see \cite[Example II.7]{MM} as a counter-example), but it could be extended by using semi doubly stochastic operators which is a new class of operators that introduced by Manjegani and et.al \cite{MM} on $l^1$ space and extended  by Bahrami and et.al \cite{Semi} on $L^1(X, \mu)$ where $(X, \mu)$ is $\sigma$-finite measure space.
 
  The class of semi-doubly stochastic operators is larger than the class of doubly stochastic operators and smaller than the class of integral preserving operators which is known as Markov operators or stochastic operators in references. It is worth noting that the theory of Markov operators is extremely rich, one of its applications is an examination of the eventual behavior of densities in dynamical systems for more details see \cite{Markov}. 
  
  The definition of doubly stochastic, semi-doubly stochastic, and Markov operators are as follows, but before that, we recall that each bounded linear map $T: ~ X\to Y,$ between two normed linear spaces $X$ and $Y$, induces a bounded linear operator $T^*:Y^*\to X^*$, between the dual spaces, defined for all $g\in Y^*$ and $x\in X$ by $\langle x, T^*g\rangle=\langle Tx, g\rangle$, which $\langle \cdot ,\cdot \rangle$ denotes the dual pairing between the dual spaces. 
  
 \begin{definition}\label{Semi} 
Let $(X,\mu)$ be an arbitrary measure space. 
 \begin{itemize}
 \item[(a)]\label{defMarkov}  A positive operator $T:L^1(X)\to L^1(X)$ is Markov operator if it satisfies for all $f\in L^1(X)$,
$$\int_X Tf ~d\mu=\int_X f ~d\mu.$$
In ordinary terms $T^*(1)=1$.
The set of all Markov operators on $L^1(X)$ denoted by $\mathcal{M}(L^1(X))$.
 \item[(b)] \label{defsemi} $T\in \mathcal{M}(L^1(X))$ is semi-doubly stochastic operator if it satisfies for each $E\in \mathcal{A}$ with $\mu(E)<\infty$, \[\int_X T^*\chi_E ~d\mu\leq \mu(E).\]
 The set of all semi-doubly stochastic operators on $L^1(X)$ denoted by $S\mathcal{D}(L^1(X))$.
 \item[(c)] \label{defdoubly} $T\in S\mathcal{D}(L^1(X))$ is doubly stochastic operator if it satisfies for each $g\in L^\infty(X)$, \[\int_X T^*g ~d\mu= \int_X g ~d\mu.\]
 The set of all doubly stochastic operators on $L^1(X)$ denoted by $\mathcal{D}(L^1(X))$.
 \end{itemize} 
\end{definition}
In the following lemma, we easily prove that for arbitrary either finite or infinite measure space $X$ the Markov operators are bounded, and therefore all semi-doubly stochastic and doubly stochastic operators are bounded.
\begin{lemma}
Let $T\in \mathcal{M}(L^1(X))$, then $||T||=1$, where 
$$||T||=\sup \{||Tf||_1:~~||f||_1=1\}.$$
\end{lemma}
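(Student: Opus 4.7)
The plan is to prove $\|T\|\le 1$ and $\|T\|\ge 1$ separately, using positivity to reduce everything to non-negative functions and then invoking the integral-preserving property.

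First, I would observe that for any $f\in L^1(X)$, decomposing $f=f^+-f^-$ with $f^\pm\ge 0$, positivity of $T$ gives $Tf^\pm\ge 0$ $\mu$-a.e., and therefore the pointwise inequality
\[
|Tf|=|Tf^+-Tf^-|\le Tf^++Tf^-=T(f^++f^-)=T|f|
\]
holds a.e. (using linearity of $T$ in the middle step). Integrating both sides and applying the Markov property $\int_X T|f|\,d\mu=\int_X|f|\,d\mu$ yields
\[
\|Tf\|_1=\int_X|Tf|\,d\mu\le\int_X T|f|\,d\mu=\int_X|f|\,d\mu=\|f\|_1,
\]
so $\|T\|\le 1$.

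For the reverse inequality, I would pick any non-negative $f\in L^1(X)$ with $\|f\|_1=1$ (for instance the normalized indicator of a set of finite positive measure, available since a $\sigma$-finite space always contains one; for the general arbitrary measure space case one needs at least one such set, which exists whenever $L^1(X)$ is non-trivial). Then $Tf\ge 0$ by positivity, so
\[
\|Tf\|_1=\int_X Tf\,d\mu=\int_X f\,d\mu=\|f\|_1=1,
\]
which shows $\|T\|\ge 1$.

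The argument is essentially routine; the only subtle point is justifying the pointwise bound $|Tf|\le T|f|$, which relies on the fact that positivity of $T$ together with linearity forces $T$ to respect absolute values as an inequality. No compactness or duality machinery is needed, and the $\sigma$-finiteness hypothesis enters only in guaranteeing the existence of a non-negative unit-norm test function for the lower bound.
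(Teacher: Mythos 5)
Your proof is correct, and it is in fact more careful than the one the paper gives. The paper's proof is a single chain of equalities
\[
\|T\| \;=\; \sup\Bigl\{\|Tf\|_1 : \|f\|_1=1\Bigr\} \;=\; \sup\Bigl\{\int_X Tf\,d\mu : \|f\|_1=1\Bigr\} \;=\; \sup\Bigl\{\int_X f\,d\mu : \|f\|_1=1\Bigr\}\;=\;1,
\]
which tacitly identifies $\|Tf\|_1$ with $\int_X Tf\,d\mu$ and $\|f\|_1$ with $\int_X f\,d\mu$ for arbitrary $f$ of unit norm; these identifications are valid only for non-negative $f$ (respectively non-negative $Tf$), so as written the paper's argument really establishes only that the supremum over \emph{non-negative} unit vectors equals $1$. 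Your two-sided argument fills exactly this gap: the pointwise bound $|Tf|\le Tf^+ + Tf^- = T|f|$, which follows from positivity and linearity, together with the Markov property gives $\|Tf\|_1\le \|f\|_1$ for \emph{all} $f$, and a single non-negative unit-norm test function (which exists as soon as $L^1(X)$ is non-trivial; the $\sigma$-finiteness standing hypothesis guarantees a set of finite positive measure unless $\mu$ is trivial) gives the reverse inequality. In short, your route buys the general case that the paper's one-line computation silently skips, at the cost of the small extra lemma $|Tf|\le T|f|$; the conclusions agree.
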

\begin{proof}
By definition of Markov operator for each $f\in L^1(X)$,
$$\int_X Tf ~d\mu=\int_X f ~d\mu.$$ 
Therefore 
\begin{align*}
||T||&= \sup \{||Tf||_1:~~||f||_1=1\}
= \sup \{\int_X Tf ~d\mu:~~||f||_1=1\}\\&=\sup \{\int_X f ~d\mu:~~||f||_1=1\} = \sup \{||f||_1:~~||f||_1=1\}=1 \end{align*} 
\end{proof}
By adopting counting measure $\mu$ on the space of natural numbers $\mathbb{N}$, we rewrite the previous definition for the Banach space $l^1$, consisting of all sequences whose series are absolutely convergent. In fact majorization theory on $l^1$ space play a key role in quantum information theory, for instant the generalization of Nielsen’s result for infinite dimensional quantum system asserts that quantum states $\phi$ can be convertible to quantum states $\psi$ if and only if  the sequence of Schmidt coefficients of $\phi$ is majorized by the sequence of Schmidt coefficients of $\psi$ \cite[Theorem 1]{Rajesh}, and by infinite dimensional version of Schmidt decomposition theorem the sequences of Schmidt coefficients are belong to $l^1$ space. Therefore rewriting of above definition would be useful to avoid confusion for readers in quantum information theory area.
Before that, we denote sequence $e_n\in l^1$ as well known Kronecker delta $\delta_{mn}=(e_n)_m$ for each $m,n\in \mathbb{N}$, and also we denote the dual pairing between $l^1$ and its dual space $l^\infty$
with $\langle\cdot,\cdot\rangle: l^1\times l^\infty\rightarrow \mathbb{R}$ which for each $f\in l^1$ and $g\in l^\infty$, $\langle f,g\rangle=\sum_{i\in \mathbb{N}}f_ig_i$. If we consider $e_i$ as an element of $l^\infty$
then for each $f\in l^1$, $\sum_{i\in \mathbb{N}} \langle f, e_i\rangle=\sum_{i\in \mathbb{N}} f_i$.
We are now ready to rewrite above definition on $l^1$ space. 
\begin{definition}\label{Quantum def}
 A positive operator $D: l^1 \rightarrow l^1$ is 
\begin{itemize}
    \item[(a)] 
A Markov operator if
$\sum_{i=1}^\infty \langle De_j, e_i\rangle=1$.
  \item[(b)] 
A semi-doubly stochastic operator if 
$\sum_{i=1}^\infty \langle De_j, e_i\rangle=1$,
and 
$\sum_{j=1}^\infty \langle De_j, e_i\rangle\leq 1$.
  \item[(c)] 
A doubly stochastic operator if 
$\sum_{i=1}^\infty \langle De_j, e_i\rangle=1$,
and 
$\sum_{j=1}^\infty \langle De_j, e_i\rangle=1$.
\end{itemize}
\end{definition}
In general form it is obvious that 
\begin{equation}\label{incl}
    \mathcal{D}(L^1(X))\subseteq S\mathcal{D}(L^1(X))\subseteq \mathcal{M}(L^1(X)).\end{equation}
 It is worth referring to an important result by Bahrami et.al \cite[Proposition 2.6]{Semi} that if $\mu(X)<\infty$ then on $L^1(X, \mu)$ the semi-doubly stochastic operator coincide with the doubly stochastic operator. We provide a counterexample that asserts that in general the converse of inclusions \eqref{incl} is not true.
\begin{example}\label{Example1}
Let $\mu$ be counting measure on $X=\mathbb{N}$. Then positive operators $T_1,T_2,T_3:l^1\to l^1$ for each sequence $(a_n)_{n\in\mathbb{N}}\in l^1$ defined by 
$$T_1(a_n)=(\sum_{i=1}^\infty a_n,0,0,\dots),$$
$$T_2(a_n)=(0,a_1,a_2,\dots),$$
$$T_3(a_n)=(a_1,a_2,a_3,\dots).$$
Easily can be seen that 
$T_1\in\mathcal{M}(l^1), T_2\in S\mathcal{D}(l^1), T_3\in\mathcal{D}(l^1)$.
$T^*_1:l^\infty\to l^\infty$ defined by $T^*_1((b_n))=(b_1,b_1,\dots)$ and therefore
$T^*_1(\chi_{\{1\}})=T^*_1(e_1)=(1,1,\dots)$. Hence 
$T_1\notin S\mathcal{D}(l^1)$.
and also $T^*_2:l^\infty\to l^\infty$ as left shift operator define as $T^*_2((b_n))=(b_2,b_3,\dots)$. Hence 
$T^*_2(e_1)=(0,0,\dots)$ and $\int T^*_2(e_1)~d\mu=0<\int_{\mathbb{N}}e_1 ~d\mu=1$ therefore $T_2\notin \mathcal{D}(l^1)$.
\end{example}
Now the following theorem is to bring together two areas of Subsections \ref{1} and \ref{2}, that is an equivalent condition for majorization concept based on decreasing rearrangement (Definition \ref{DecRe}) by using semi-doubly stochastic operators (Definition \ref{Semi}).
 \begin{theorem}\cite[Corollary 2.10]{Semi}\label{main} Let $X$ be a $\sigma$-finite measure space and $f,g$ be non-negative belongs to $L^1(X)$. Then
$g \prec f$ if and only if there is a sequence $(S_n)_{n\in\mathbb{N}}$ in $S\mathcal{D}(
L^1(X))$ such that $S_nf \to g$ in $L^1(X)$.
\end{theorem}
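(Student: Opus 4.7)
The plan is to prove the two implications separately. For the implication $(\Leftarrow)$, I would first show that every $S \in S\mathcal{D}(L^1(X))$ satisfies $Sf \prec f$ for non-negative $f$, and then pass to the $L^1$-limit. Fix $u > 0$ and set $E_u = \{Sf > u\}$, which has finite measure by Chebyshev since $Sf \in L^1$, and let $\phi := S^\ast \chi_{E_u}$. Then $0 \le \phi \le 1$ since $S^\ast 1 = 1$, and $\int_X \phi \, d\mu \le \mu(E_u)$ by the semi-doubly stochastic property. Adjointness combined with these two bounds yields
\begin{align*}
\int_X (Sf - u)^+ \, d\mu &= \int_X f\phi \, d\mu \;-\; u\mu(E_u) \\
&\le \int_X f\phi \, d\mu \;-\; u\int_X \phi \, d\mu \\
&= \int_X (f - u)\phi \, d\mu \;\le\; \int_X (f - u)^+ \, d\mu.
\end{align*}
By Theorem \ref{Chong} this is exactly $\int_u^\infty d_{Sf} \, dm \le \int_u^\infty d_f \, dm$ for all $u > 0$; combined with $\int_X Sf \, d\mu = \int_X f \, d\mu$ from the Markov condition, Theorem \ref{Chong2} delivers $Sf \prec f$. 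Since $h \mapsto \int_X (h - u)^+ \, d\mu$ is $1$-Lipschitz on $L^1(X)$ (the positive part is a contraction), the same inequality passes to the limit along $S_n f \to g$, giving $g \prec f$.

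For the necessity direction $(\Rightarrow)$, my plan is to approximate-and-lift. Using $\sigma$-finiteness, exhaust $X$ by an increasing sequence $(X_n)$ of sets of finite measure and approximate $f$ and $g$ in $L^1(X)$ by non-negative simple functions $f_n, g_n$ supported in $X_n$, constant on the cells of a common finite measurable partition of $X_n$, and satisfying $g_n \prec f_n$ at every stage. Theorem \ref{majvect} then provides a doubly stochastic matrix $D_n$ transporting the value vector of $f_n$ to that of $g_n$. I would lift $D_n$ to an operator $S_n \in S\mathcal{D}(L^1(X))$ by letting it act through weighted averaging on the partition cells inside $X_n$ (with weights prescribed by $D_n$) and by a mass-preserving stochastic extension on $X \setminus X_n$ chosen so that the bound $\int_X S_n^\ast \chi_E \, d\mu \le \mu(E)$ holds for every $E$ of finite measure. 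By construction $S_n f_n = g_n$, and since $\|S_n\| = 1$ the triangle inequality $\|S_n f - g\|_1 \le \|f - f_n\|_1 + \|g_n - g\|_1$ gives $S_n f \to g$ in $L^1(X)$.

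The main obstacle lies in the necessity direction, and specifically in the simultaneous construction of the approximating simple functions $f_n, g_n$: they must converge to $f, g$ in $L^1$ while preserving the \emph{full} majorization $g_n \prec f_n$ (not just the weak majorization) at every finite stage, since naive discretization easily destroys the equality $\int f_n\, d\mu = \int g_n\, d\mu$. The most natural route is to build $f_n$ and $g_n$ from common level-set refinements of $f$ and $g$ and to exploit the monotonicity of $h \mapsto \int_0^t h^\downarrow \, dm$ under conditional expectations; then the lifted doubly stochastic matrix must be carefully extended outside $X_n$ so that the resulting global operator is genuinely semi-doubly stochastic on all of $L^1(X)$, and not merely on the simple-function subspace on which it was originally defined.
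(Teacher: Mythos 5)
First, note that the paper itself does not prove this statement: it is quoted verbatim as Corollary~2.10 of \cite{Semi}, so there is no internal proof to compare against. Your sufficiency direction ($\Leftarrow$) is correct and complete: the identity $\int_X (Sf-u)^+\,d\mu=\int_X fS^{*}\chi_{E_u}\,d\mu-u\mu(E_u)$, the bounds $0\le S^{*}\chi_{E_u}\le 1$ and $\int_X S^{*}\chi_{E_u}\,d\mu\le\mu(E_u)$, and the $1$-Lipschitz continuity of $h\mapsto\int_X(h-u)^+\,d\mu$ yield exactly condition (5) of Theorem \ref{main2} for the limit $g$, after which Theorems \ref{Chong} and \ref{Chong2} give $g\prec f$; this is in substance the same chain the paper uses for $(5)\Rightarrow(1)$ there.

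The necessity direction, however, is only a plan, and its central step fails as stated. You propose to obtain simple $f_n,g_n$ with $g_n\prec f_n$ via common level-set refinements and ``monotonicity of $h\mapsto\int_0^t h^{\downarrow}\,dm$ under conditional expectations.'' Conditional expectation onto a finite partition $P$ (the operator $G_P$ of Proposition \ref{Perfect Example}) only gives $G_Ph\prec h$ for each $h$ separately; it does not preserve the relation \emph{between} two functions. Concretely, on $X=[0,2]$ with Lebesgue measure take $f=2\chi_{[1/2,3/2]}$, $g=\tfrac{3}{2}\chi_{[0,1]}+\tfrac{1}{2}\chi_{[1,2]}$ and $P=\{[0,1],[1,2]\}$: then $g\prec f$ and $G_Pg=g$, but $G_Pf\equiv 1$ and $\int_0^1 g^{\downarrow}\,dm=\tfrac{3}{2}>1=\int_0^1 (G_Pf)^{\downarrow}\,dm$, so $G_Pg\not\prec G_Pf$. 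Hence the discretization must be adapted to $f$ so that $\int_0^t f_n^{\downarrow}\,dm$ stays uniformly close to $\int_0^t f^{\downarrow}\,dm$, and the approximants must then be repaired to restore both $g_n\prec_w f_n$ and the exact equality $\int g_n\,d\mu=\int f_n\,d\mu$; none of this is carried out. Two further unaddressed points: Theorem \ref{majvect} produces a doubly stochastic matrix only when the partition cells have equal measure (otherwise a weighted Hardy--Littlewood--P\'olya theorem is needed, and atoms cannot be subdivided), and the ``mass-preserving stochastic extension on $X\setminus X_n$'' satisfying $\int_X S_n^{*}\chi_E\,d\mu\le\mu(E)$ is asserted rather than constructed. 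As written, the forward implication remains at the level of a (reasonable) intuition, not a proof.
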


\section{ Some Results on Semi-Doubly Stochastic Operators }\label{Section2}
\subsection{Method of Constructing Semi-Doubly Stochastic Operators} In the following theorem, we present a method for constructing doubly (and semi-doubly) stochastic operators. To this end, we assume that $(X,\mathcal{A},\mu)$ is a $\sigma$-finite measure space. Also we assume that $\{A_n;\quad n\in \mathbb{N}\}$ is a sequence of measurable sets with finite measure such that $X=\cup_{n=1}^\infty A_n$ and for each $n\in \mathbb{N}$, $A_n\subseteq A_{n+1}$. We denote the set of all measurable simple functions in $L^1(X)$ by  $\mathcal{S}$.

\begin{theorem}\label{prop*}
Let $D:\mathcal{S}\to L^1(X)$ be a linear function. Then $D$ has a unique extension to a doubly stochastic operator if and only if $D$ is nonnegative and the following inequalities hold for each $E\in \mathcal{A}$ with $\mu(E)<\infty$:
\begin{equation}\label{E1}
\int_X D\chi_E d\mu=\mu(E),
\end{equation}
\begin{equation}\label{E2}
\lim_{n\to\infty}\int_X \chi_E D\chi_{A_n}d\mu=\mu(E).
\end{equation}
\end{theorem}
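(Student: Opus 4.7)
The plan is to prove both directions in turn, with the bulk of the work concentrated on the sufficiency direction. For necessity, suppose $D$ extends to a doubly stochastic $\tilde D \in \mathcal{D}(L^1(X))$. Linearity and positivity of $\tilde D$ then restrict to linearity and nonnegativity of $D$ on $\mathcal{S}$, and (E1) is immediate from the Markov identity applied to $f = \chi_E$. For (E2), I would shift $\tilde D$ to its adjoint via $\int_X \chi_E\, \tilde D \chi_{A_n}\, d\mu = \int_X \chi_{A_n}\, \tilde D^* \chi_E\, d\mu$; since $\tilde D^* \chi_E \geq 0$ by positivity and $\chi_{A_n} \uparrow 1$, monotone convergence sends this to $\int_X \tilde D^* \chi_E\, d\mu$, which equals $\mu(E)$ by the doubly-stochastic property specialized to $g = \chi_E$.

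For sufficiency, I would first establish that $D$ is an $L^1$-contraction on $\mathcal{S}$. For $f \in \mathcal{S}$, decomposing $f = f^+ - f^-$ and using nonnegativity yields $|Df| \leq Df^+ + Df^- = D|f|$. Writing $|f| = \sum_{i=1}^{k} c_i \chi_{E_i}$ with $c_i > 0$, the $E_i$ disjoint, and necessarily $\mu(E_i) < \infty$ (since $|f| \in L^1$), hypothesis (E1) gives $\int_X D|f|\, d\mu = \sum_i c_i \mu(E_i) = \|f\|_1$, so $\|Df\|_1 \leq \|f\|_1$. Because $\mathcal{S}$ is dense in $L^1(X)$, $D$ then admits a unique bounded extension $\tilde D : L^1(X) \to L^1(X)$. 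Positivity of $\tilde D$ follows by approximating nonnegative $L^1$ functions in norm by nonnegative simple functions, and the Markov identity for $\tilde D$ is just (E1) extended by linearity and continuity.

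The most delicate step is verifying that $\tilde D$ lies in $\mathcal{D}(L^1(X))$ and not merely in $S\mathcal{D}(L^1(X))$. For any $E \in \mathcal{A}$ with $\mu(E) < \infty$, positivity of $\tilde D^*$ makes $\tilde D^* \chi_E \in L^\infty(X)$ nonnegative, so monotone convergence combined with adjoint-duality yields
\begin{equation*}
\int_X \tilde D^* \chi_E\, d\mu = \lim_n \int_X \chi_{A_n} \tilde D^* \chi_E\, d\mu = \lim_n \int_X \chi_E\, D \chi_{A_n}\, d\mu = \mu(E),
\end{equation*}
the last equality being exactly (E2). Together with $\tilde D^* 1 = 1$ from the Markov property, this finite-measure identity can be promoted to $\int_X \tilde D^* g\, d\mu = \int_X g\, d\mu$ for all $g \in L^\infty(X)$ by truncating $g$ against $\{A_n\}$ and splitting into positive and negative parts. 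The hard part is precisely this promotion: hypothesis (E2) supplies only finite-measure test data, while the definition of $\mathcal{D}(L^1(X))$ quantifies over all of $L^\infty$, and the $\sigma$-finite exhaustion $A_n \uparrow X$ is the mechanism that bridges the two.
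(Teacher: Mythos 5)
Your proposal is correct and follows essentially the same route as the paper: an $L^1$-contraction bound on simple functions from \eqref{E1}, unique extension by density of $\mathcal{S}$, the Markov identity from \eqref{E1}, and the adjoint identity $\int_X D^*\chi_E\,d\mu=\mu(E)$ from \eqref{E2} via the exhaustion $A_n\uparrow X$, duality, and monotone convergence. The final ``promotion'' to all of $L^\infty$ that you flag but do not carry out is also left implicit in the paper, whose own proof only verifies the adjoint identity on simple functions, so your treatment is no less complete (and your necessity direction is more detailed than the paper's ``easily verifiable'').
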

\begin{proof}
Let $\varphi=\sum_{i=1}^n a_i \chi_{E_i}$ be a simple function. Then according to \eqref{E1}, 
\begin{equation}\label{E3}
\int_X D\varphi d\mu=\int_X D(\sum_{i=1}^n a_i\chi_{E_i})d\mu=\sum_{i=1}^n a_i\int_X D\chi_{E_i}d\mu=\sum_{i=1}^n a_i\mu(E_i)=\int_X \varphi d\mu,
\end{equation}
therefore for each $\varphi\in S$,
\[\|D \varphi\|_1=\int_X \vert D\varphi\vert d\mu\leq\sum_{i=1}^n \vert a_i\vert\int_X D\chi_{E_i} d\mu=\sum_{i=1}^n \vert a_i\vert\mu(E_i)=\int_X \vert \varphi\vert d\mu=\|\varphi\|_1.\]
Hence operator $D:\mathcal{S}\to L^1(X)$ is bounded and $\|D\|\leqslant 1$. Since $\mathcal{S}$ is dense in $L^1(X)$, $D$ has a unique extension to $L^1(X)$ which we denote it again by $D$. 

Due to \eqref{E3}, for each $f\in L^1(X)$,
\[\int_X Df d\mu=\int_X f d\mu,\]
and by using \eqref{E2}, for each $\varphi\in \mathcal{S}\subset L^1(X)\cap L^{\infty}(X)$,
\begin{align}\label{t}
\int_X D^*\varphi d\mu &= \lim_{n\to\infty}\int_X D^*\varphi \chi_{A_n}d\mu = \lim_{n\to\infty}\int_X\varphi D\chi_{A_n}d\mu \\ &= \lim_{n\to\infty}\sum_{i=1}^m a_i\int_X \chi_{E_i } D\chi_{A_n}d\mu = \sum_{i=1}^m a_i\mu(E_i) =\int_X \varphi d\mu.
\end{align}
Hence qualities in \eqref{t} hold for each $f\in S\subset L^1(X)\cap L^{\infty}(X)$ and it means that
$$\int_X D^*f d\mu=\int_X f d\mu.$$
Therefore
$D:\mathcal{S}\to L^1(X)$ has a unique extension to a  doubly stochastic operator on $L^1(X)$. The reverse is easily verifiable.
\end{proof}
It is very obvious that with the same proof and very slight modification, we have the semi-doubly stochastic version of the above Theorem as follows.
\begin{theorem}\label{SDSprop*}
Let $S:\mathcal{S}\to L^1(X)$ be a linear function. Then $S$ has a unique extension to a semi-doubly stochastic operator if and only if $S$ is nonnegative and the following inequalities hold for each measurable set $E$ with $\mu(E)<\infty$:
\begin{equation}\label{EE1}
\int_X S\chi_E d\mu=\mu(E),
\end{equation}
\begin{equation}\label{EE2}
\lim_{n\to\infty}\int_X \chi_E S\chi_{A_n}d\mu\leq\mu(E).
\end{equation}
\end{theorem}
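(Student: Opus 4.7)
The plan is to mirror the proof of Theorem \ref{prop*} verbatim through the step where $S$ is extended to a bounded operator on $L^1(X)$, and then replace the final equality argument with an inequality argument. Specifically, I would first use \eqref{EE1} to conclude, exactly as in \eqref{E3}, that $\int_X S\varphi\, d\mu = \int_X \varphi\, d\mu$ for every simple $\varphi\in\mathcal{S}$, and deduce $\|S\varphi\|_1 \leq \|\varphi\|_1$ from the nonnegativity of $S$ together with this integral-preservation on characteristic functions. Since $\mathcal{S}$ is dense in $L^1(X)$, $S$ then has a unique bounded linear extension, still denoted $S$, to all of $L^1(X)$. By continuity, $\int_X Sf\, d\mu = \int_X f\, d\mu$ for every $f\in L^1(X)$, so $S\in\mathcal{M}(L^1(X))$.

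The only remaining task is to verify the defining inequality of $S\mathcal{D}(L^1(X))$, namely $\int_X S^*\chi_E\, d\mu \leq \mu(E)$ for every measurable $E$ with $\mu(E)<\infty$. Here the $\sigma$-finite exhaustion $A_n\uparrow X$ does the work: since $S\geq 0$, also $S^*\geq 0$, and $\chi_{A_n}\uparrow 1$ pointwise, so the monotone convergence theorem gives
\begin{equation*}
\int_X S^*\chi_E\, d\mu \;=\; \lim_{n\to\infty}\int_X S^*\chi_E\cdot \chi_{A_n}\, d\mu.
\end{equation*}
The duality pairing rewrites each term as $\int_X S^*\chi_E\cdot \chi_{A_n}\, d\mu = \int_X \chi_E\cdot S\chi_{A_n}\, d\mu$, and applying \eqref{EE2} to the limit yields $\int_X S^*\chi_E\, d\mu \leq \mu(E)$, as required. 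This completes the forward direction.

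The converse is immediate: if $S\in S\mathcal{D}(L^1(X))$, then $S$ is nonnegative, \eqref{EE1} follows from $\int_X S\chi_E\, d\mu = \int_X \chi_E\, d\mu = \mu(E)$ (Markov property), and \eqref{EE2} follows from the monotone convergence computation above in reverse, together with the defining inequality $\int_X S^*\chi_E\, d\mu \leq \mu(E)$.

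The main (mild) obstacle is bookkeeping around the duality swap $\int_X S^*\chi_E \cdot \chi_{A_n}\, d\mu = \int_X \chi_E \cdot S\chi_{A_n}\, d\mu$: one must ensure $S^*\chi_E \in L^\infty(X)$ is paired against $\chi_{A_n}\in L^1(X)$ on one side, and $\chi_E\in L^\infty(X)$ against $S\chi_{A_n}\in L^1(X)$ on the other. Both pairings are well defined because $\mu(E),\mu(A_n)<\infty$ and $S$ preserves $L^1$. Everything else is a direct transcription of the argument for Theorem \ref{prop*}, with the equality in \eqref{E2} weakened to the inequality in \eqref{EE2}.
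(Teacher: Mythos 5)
Your proposal is correct and is precisely the "same proof with very slight modification" that the paper invokes for Theorem \ref{SDSprop*}: the extension step is copied from Theorem \ref{prop*}, and the final equality via \eqref{E2} is replaced by the monotone-convergence/duality computation $\int_X S^*\chi_E\,d\mu=\lim_n\int_X\chi_E\,S\chi_{A_n}\,d\mu\leq\mu(E)$ using \eqref{EE2}. No gaps.
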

In this part we want to introduce another example of semi-doubly stochastic operators on $L^1(X)$ (Proposition \ref{Perfect Example}) which we will use it for characterization of majorization relation. From now on, as a contract unless otherwise stated, we will assume that $(X,\mathcal{A},\mu)$ is a $\sigma$-finite measure space, And $P:=\{ A_n :~~~ n\in \mathbb{N}\}$ is a disjoint family of measurable sets with $X=\bigcup_{n\in\mathbb{N}}A_n$ and such that $0<\mu(A_n)<+\infty$, for all $n\in \mathbb{N}$. Then it is easily seen that the map $\Phi_P:L^1(X)\to l^1$ given by 
\begin{align}\label{5}
 \Phi_P(f)=(\int_{A_1} f d\mu, \int_{A_2} f d\mu, \dots, \int_{A_n} f d\mu, \dots), \qquad  f\in L^1(X) 
\end{align}
is bounded linear map. Let 
${\Phi}^*_P: l^\infty\to L^\infty(X)$ be its adjoint.
Then for $(a_n)\in l^\infty$ and $f\in L^1(X)$,
$$\langle f, \Phi^*_P(a_n)\rangle=\langle \Phi_P (f), (a_n)\rangle=\sum_{n=1}^\infty a_n \int_{A_n} f d\mu= \int_X f \sum_{n=1}^\infty a_n \chi_{A_n}d\mu$$
Therefore, 
\[  \quad \Phi^*_P(a_n)=\sum_{n=1}^\infty a_n \chi_{A_n}, \qquad \forall (a_n)\in l^\infty.\]
Similarly, if $\Psi_P:l^1\to L^1(X)$ is defined by
\begin{align}\label{6}
\Psi_P(a_n)=\sum_{n=1}^\infty \frac{a_n }{\mu(A_n)}\chi_{A_n},\qquad \forall (a_n)\in l^\infty,
\end{align}
then $\Psi_P$ is also a bounded linear map with the adjoint $\Psi_P^*: L^\infty(X)\to l^\infty$  $\forall g\in L^\infty(X)$ is defined by 
\[  \Psi_P^*(g)=(\frac{1}{\mu(A_1)}\int_{A_1} f d\mu, \frac{1}{\mu(A_2)}\int_{A_2} f d\mu, \dots, \frac{1}{\mu(A_n)}\int_{A_n} f d\mu, \dots). \]
\begin{proposition}\label{Perfect Example}
The bounded linear map $G_P: L^1(X)\to L^1(X)$ defined by $G_P=\Psi_P\Phi_P$ is a doubly stochastic operator, and accordingly a semi-doubly stochastic.
\begin{proof}
Using the above considerations, we have 
$$\forall f\in L^1(X),\quad G_P(f)=\Psi_P(\Phi_P(f))=\sum_{n=1}^\infty (\frac{1 }{\mu(A_n)}\int_{A_n} f d\mu)\chi_{A_n},$$
and
\[\forall g\in L^\infty(X),\quad G^{*}_P(g)=\Psi_P^*(\Phi_P^*(g))=\sum_{n=1}^\infty (\frac{1 }{\mu(A_n)}\int_{A_n} g \,d\mu)\chi_{A_n}.\]
Clearly $G_P$ is a positive operator. For $f\in L^1(X)$, using monotone convergence Theorem, we have

\begin{align*}
\int_x G_P(f) d\mu & =  \sum_{n=1}^\infty (\frac{1 }{\mu(A_n)}\int_{A_n} f d\mu)\int_X \chi_{A_n} d\mu\\
& = \sum_{n=1}^\infty\int_{A_n} f d\mu = \int_X f d\mu.
\end{align*}
\\
If $g\in L^\infty(X)\cap L^1(X), $
\[\int_x \vert G^*_P(g)\vert d\mu \leq \sum_{n=1}^\infty (\frac{1 }{\mu(A_n)}\vert\int_{A_n} g d\mu\vert\int_X \chi_{A_n} d\mu\leq
\int_X \vert g\vert d\mu < \infty \]
i.e. $G^*_P(g)\in L^1(X)$. Similarly, 
$$\int_X G^*_P(g) d\mu = \sum_{n=1}^\infty \int_{A_n} g d\mu=\int_X g d\mu. $$
By definition $G_P\in \mathcal{D}(L^1(X))$, and accordingly $G_P\in S\mathcal{D}(L^1(X)).$
\end{proof}
\end{proposition}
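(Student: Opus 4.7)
The plan is to exploit the explicit formulas for $\Phi_P,\Psi_P$ and their adjoints that have already been derived just before the proposition. Composing $\Psi_P$ with $\Phi_P$ gives
$G_P f=\sum_{n=1}^\infty \bigl(\tfrac{1}{\mu(A_n)}\int_{A_n} f\,d\mu\bigr)\chi_{A_n}$, and by the chain rule for adjoints $G_P^{*}=\Phi_P^{*}\Psi_P^{*}$, so $G_P^{*}g=\sum_{n=1}^\infty \bigl(\tfrac{1}{\mu(A_n)}\int_{A_n} g\,d\mu\bigr)\chi_{A_n}$. Positivity of $G_P$ is immediate: if $f\geq 0$, each coefficient $\tfrac{1}{\mu(A_n)}\int_{A_n} f\,d\mu$ is nonnegative and the $\chi_{A_n}$ are nonnegative.

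For the Markov condition I would integrate the series for $G_P f$ term-by-term. Since the $A_n$ are disjoint with $0<\mu(A_n)<\infty$, the $n$-th term contributes exactly $\int_{A_n} f\,d\mu$; summing and using that $\{A_n\}$ partitions $X$ yields $\int_X f\,d\mu$. The interchange of sum and integral is justified by monotone convergence when $f\geq 0$, and the general case follows by decomposing $f=f^+-f^-$ and invoking linearity.

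For the doubly stochastic condition the cleanest route is to verify $\int_X G_P^{*} g\,d\mu=\int_X g\,d\mu$ first on indicators $g=\chi_E$ with $\mu(E)<\infty$: a direct computation from the formula for $G_P^{*}$ gives $\int_X G_P^{*}\chi_E\,d\mu=\sum_{n=1}^\infty \tfrac{\mu(A_n\cap E)}{\mu(A_n)}\mu(A_n)=\sum_{n=1}^\infty \mu(A_n\cap E)=\mu(E)$ by countable additivity of $\mu$ along the partition $\{A_n\}$. This then extends to simple functions with finite-measure support by linearity, and to all $g\in L^\infty(X)\cap L^1(X)$ by a density/dominated-convergence argument. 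The semi-doubly stochastic assertion is then automatic from the inclusion $\mathcal{D}(L^1(X))\subseteq S\mathcal{D}(L^1(X))$ recorded in \eqref{incl}.

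The part I expect to be slightly delicate is the interchange of integral and infinite sum for $G_P^{*}$, because elements of $L^\infty(X)$ need not be integrable; one must first reduce to functions supported on a finite-measure set (or to the subspace $L^\infty\cap L^1$) before term-by-term integration is safe. Once this reduction is made the countable additivity of $\mu$ on the partition $\{A_n\}$ drives every computation, and no further subtlety arises.
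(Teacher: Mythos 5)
Your proposal is correct and follows essentially the same route as the paper: both derive the explicit series formulas for $G_P$ and $G_P^{*}$, verify positivity, integrate term-by-term over the disjoint partition (monotone convergence for $f\geq 0$, then $f=f^{+}-f^{-}$), and restrict the adjoint computation to $L^\infty(X)\cap L^1(X)$ before invoking the inclusion $\mathcal{D}(L^1(X))\subseteq S\mathcal{D}(L^1(X))$. The only cosmetic difference is that you first check the adjoint identity on indicators $\chi_E$ and extend by density, whereas the paper computes directly for $g\in L^\infty(X)\cap L^1(X)$ after establishing absolute convergence; you also write the adjoint composition in the correct order $G_P^{*}=\Phi_P^{*}\Psi_P^{*}$, which the paper's displayed formula transposes (harmlessly, since the resulting expression is the same).
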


The previous proposition is an spacial case of the following theorem when $D:=I$. To see why we will use extra assumption on measure space $X$ in general case (Theorem \ref{d}), first let $D:l^1\to l^1$ be a doubly stochastic operator and $\Phi_P:L^1(X)\to l^1$ and $\Psi_P: l^1\to L^1(X)$ are the maps defined in \eqref{5} and \eqref{6}, corresponding to the family of measurable subsets $A=\{A_n:~~~ n\in \mathbb{N}\}$ of $X$ with $X=\cup_{n\in \mathbb{N}} A_n$  and such that $0<\mu(A_n)<\infty,$ for all $n\in \mathbb{N}$. If $G_D:=\Psi_PD\Phi_P$ that for each $f\in L^1(X),$
$$G_D(f)= \sum_{n=1}^\infty (\frac{1 }{\mu(A_n)}(\sum_{j=1}^\infty d_{jn}\int_{A_j} f d\mu)\chi_{A_n}. $$
Hence 
\[\int_X G_D(f) d\mu = \sum_{n=1}^\infty \sum_{j=1}^\infty d_{jn}\int_{A_j} f d\mu=\sum_{j=1}^\infty\int_{A_j} f d\mu (\sum_{n=1}^\infty d_{jn})=\sum_{j=1}^\infty\int_{A_j} f d\mu\]\[ =\int_{X} f d\mu.  \]
 So to prove $G_D$ is a doubly stochastic operator (resp. semi- doubly stochastic), we should obtain the same equality relation (resp. inequality relation) for the map $G^*_D=\Psi^*_A D^*\Phi^*_A$.
\begin{theorem}\label{d}
Let ($X,\mathcal{A}, \mu$) be $\sigma$-finite or non-finite measure space and suppose \[\inf\{\mu(A_n), \quad n=1, 2 , \cdots\}=a\ne 0.\] 
Then,
\begin{itemize}
\item[(i)]
if $D:l^1\to l^1$ is semi-doubly stochastic operator, then the map $G_D:L^1(X)\to L^1(X)$ defined by $G_D=\Psi_P D \Phi_P$ is a semi-doubly stochastic operator.
\item[(ii)]
if $D:L^1(X)\to L^1(X)$ is a semi-doubly stochastic operator then the map $G_D:=\Phi_P D \Psi_P: l^1\to l^1$ is also semi-doubly stochastic on the sequence space $l^1$.
\end{itemize}
\end{theorem}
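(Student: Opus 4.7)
The plan in both parts is to verify the three defining conditions of a semi-doubly stochastic operator on the target space: positivity, the Markov (integral-preservation) identity, and the inequality $\int T^* \chi_E \,d\mu \le \mu(E)$ for every $E$ of finite measure (or its $l^1$-analogue). Positivity is automatic, since $\Phi_P$, $\Psi_P$ and $D$ are all positive and a composition of positive maps is positive. The Markov identity for part (i) is essentially already computed in the paragraph preceding the theorem, where the column-sum identity $\sum_k d_{k,n}=1$ is used. For part (ii), the same conclusion follows from
\begin{equation*}
\sum_i (G_D(a))_i \;=\; \int_X D\Psi_P(a)\,d\mu \;=\; \int_X \Psi_P(a)\,d\mu \;=\; \sum_n a_n,
\end{equation*}
where the middle equality uses that $D$ is Markov on $L^1(X)$.

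The substantive step in each part is the semi-doubly stochastic inequality. For part (i) I would expand $G_D^* = \Phi_P^* D^* \Psi_P^*$ explicitly: $\Psi_P^* \chi_E$ is the bounded sequence $\bigl(\mu(E \cap A_n)/\mu(A_n)\bigr)_n$, applying $D^*$ produces a second bounded sequence in terms of the matrix entries of $D$, and $\Phi_P^*$ then spreads it back over $X$ as a simple function $\sum_n c_n \chi_{A_n}$. Integrating over $X$ and interchanging the order of summation reduces $\int_X G_D^* \chi_E \,d\mu$ to a weighted combination of the row sums $\sum_n d_{k,n}$ of $D$; the semi-doubly stochastic bound $\sum_n d_{k,n} \le 1$, together with the hypothesis $\inf_n \mu(A_n) = a \ne 0$ on the partition, then gives $\int_X G_D^* \chi_E \,d\mu \le \mu(E)$. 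Part (ii) is the dual computation: I would write $\sum_j \langle G_D e_j, e_i \rangle = \sum_j \mu(A_j)^{-1} \int_{A_i} D\chi_{A_j}\,d\mu$, move $D$ to its adjoint to obtain $\sum_j \mu(A_j)^{-1} \int_{A_j} D^*\chi_{A_i}\,d\mu$, and invoke the semi-doubly stochastic inequality $\int_X D^*\chi_{A_i}\,d\mu \le \mu(A_i)$ for $D$ on $L^1(X)$, again leveraging $\mu(A_j) \ge a$ to conclude the $l^1$-bound by $1$.

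The main obstacle is this last bookkeeping step: the interchange of summation produces weights of the form $\mu(A_n)/\mu(A_k)$ in part (i) (and $\mu(A_i)/\mu(A_j)$ in part (ii)), and controlling these is exactly what forces the measure hypothesis $\inf_n \mu(A_n) = a \ne 0$ into the argument. A secondary technical wrinkle is that $\chi_X \notin L^1(X)$ when $\mu(X) = \infty$, so identities of the form $\int_X G_D^* \chi_E \,d\mu = \int_X \chi_E\, G_D\chi_X \,d\mu$ cannot be used directly; this is resolved by approximating from below using $\chi_{A_1 \cup \cdots \cup A_m}$ and passing to the limit via monotone convergence, following the same pattern used in the proof of Theorem \ref{prop*}. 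Once these two routine but careful manipulations are carried out, both conclusions drop out from the semi-doubly stochastic hypothesis on $D$.
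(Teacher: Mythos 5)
Your outline follows the same route as the paper's proof: positivity is automatic, the Markov identity is the column-sum computation already done before the theorem statement, and the semi-doubly stochastic inequality is attacked by expanding $G_D^*$ through the adjoints, interchanging the order of summation, and invoking the row-sum bound together with $\inf_n\mu(A_n)=a\ne 0$. The problem is that the step you explicitly defer as ``routine but careful bookkeeping'' is precisely the step that does not close under the stated hypothesis. In part (ii) your own chain ends with the bound $\frac{1}{a}\int_X D^*\chi_{A_i}\,d\mu\le\frac{1}{a}\mu(A_i)$, and $\frac{1}{a}\mu(A_i)\le 1$ is equivalent to $\mu(A_i)\le\inf_j\mu(A_j)$, i.e.\ to $\mu(A_i)=a$. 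The same thing happens in part (i): after Fubini the integral $\int_X G_D^*\chi_E\,d\mu$ carries weights $\mu(A_n)/\mu(A_i)$, which the hypothesis only bounds by $\mu(A_n)/a\ge 1$, not by $1$. So the argument goes through only when every cell of the partition has measure exactly $a$; with merely $\inf_n\mu(A_n)=a\ne0$ the final inequality is not available.

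This is not a defect you can repair by being more careful, because the statement fails without the constant-measure assumption. Take $X=[0,\infty)$ with Lebesgue measure, $A_1=[0,2)$ and $A_n=[n,n+1)$ for $n\ge2$, so $a=1$. For (i) let $D$ be the permutation of $l^1$ swapping the first two coordinates (doubly stochastic, hence semi-doubly stochastic). A direct computation gives $G_D^*\chi_{A_2}=\chi_{A_1}$, so
\begin{equation*}
\int_X G_D^*\chi_{A_2}\,d\mu=\mu(A_1)=2>1=\mu(A_2),
\end{equation*}
and $G_D\notin S\mathcal{D}(L^1(X))$. A block-operator version of the same idea (an operator of the form $Df=\sum_n\bigl(\sum_m c_{nm}\int_{A_m}f\,d\mu\bigr)\mu(A_n)^{-1}\chi_{A_n}$ with $c_{12}=c_{13}=1$, $c_{21}=c_{31}=\tfrac12$, $c_{mm}=1$ for $m\ge4$) is semi-doubly stochastic on $L^1(X)$ yet yields $\Phi_P D\Psi_P=(c_{ij})$ with first row sum $2$, defeating part (ii). For the record, the paper's own proof has the identical lacuna --- it passes from $\sum_n\bigl(\sum_i d_{in}\mu(A_i)^{-1}\int_{A_i}g\,d\mu\bigr)\mu(A_n)$ to $\sum_n\sum_i d_{in}\int_{A_i}g\,d\mu$ and ends part (ii) with ``$\frac{1}{a}\mu(A_m)=1$'' --- so your proposal is faithful to what the authors did; but as a proof it only works if you strengthen the hypothesis to $\mu(A_n)=a$ for all $n$, under which both of your computations (and the paper's) close immediately.
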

\begin{proof}
(i) For $g\in L^1(X)\cap L^\infty(X)$,
$$G^*_D(g)= \sum_{n=1}^\infty (\sum_{i=1}^\infty d_{in}\frac{1 }{\mu(A_i)}\int_{A_i} g d\mu)\chi_{A_n} $$
Therefore, using the assumption, we have 
\begin{align*}
\int_X G^*_D(g) d\mu &= \sum_{n=1}^\infty (\sum_{i=1}^\infty d_{in}\frac{1 }{\mu(A_i)}\int_{A_i} g d\mu)\mu(A_n)=\sum_{n=1}^\infty \sum_{i=1}^\infty d_{in}\int_{A_i} g d\mu \\
&= \sum_{i=1}^\infty \sum_{n=1}^\infty d_{in}\int_{A_i} g d\mu \leq \sum_{i=1}^\infty \int_{A_i} g d\mu=\int_X g d\mu.
\end{align*}
(ii) For each $n\in \mathbb{N}$, 
\[De_n= \Phi_P G \Psi_P (e_n)\]\[=\Phi_P (\frac{1}{\mu(A_n)} G (\chi_{A_n}))=(\frac{1}{\mu(A_n)}\int_{A_1} G (\chi_{A_n}) d\mu, \dots, \frac{1}{\mu(A_n)}\int_{A_n} G (\chi_{A_n}) d\mu, \dots)\]
Hence
\[\sum_{m=1} ^\infty \langle De_n , e_m\rangle=\frac{1}{\mu(A_n)}\sum_{m=1} ^\infty\int_{A_m} G (\chi_{A_n}) d\mu=\frac{1}{\mu(A_n)}\int_{X} G (\chi_{A_n}) d\mu\]\[=\frac{1}{\mu(A_n)}\int_{X} \chi_{A_n} d\mu=1. \]
Similarly,
\begin{align*}
\sum_{n=1} ^\infty \langle De_n , e_m\rangle &= \sum_{n=1} ^\infty \frac{1}{\mu(A_n)}\int_{A_m} G (\chi_{A_n}) d\mu =\sum_{n=1} ^\infty \frac{1}{\mu(A_n)}\int_{A_n} G^* (\chi_{A_m}) d\mu \\ &\le \frac{1}{a}\sum_{n=1} ^\infty\int_{A_n} G^* (\chi_{A_m}) d\mu = \frac{1}{a}\int_{X}   G^* (\chi_{A_m}) d\mu \\
&\leq\frac{1}{a}\int_{X} \chi_{A_m} d\mu = \frac{1}{a}\mu(A_m)=1.
\end{align*}
\end{proof}
\subsection{Equi-Integrability}
Bahrami and et.al characterized semi-doubly stochastic operators on $L^1(X)$ when $X$ is $\sigma$-finite  measure space by using the notion of majorization as follwoing theorem.
\begin{theorem}\cite[Theorem 2.4]{Semi}
Let $(X,\mu)$ be a $\sigma$-finite measure space and $S : L^1(X)\to L^1(X)$ be a positive bounded linear operator. Then for every non-negative integrable
function $f$ on $X$, $Sf\prec f$ if and only if $S\in S\mathcal{D}(L^1(X))$.
\end{theorem}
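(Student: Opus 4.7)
The plan is to prove both implications by translating between three equivalent analytic pictures of the majorization/SDS relationship: the rearrangement inequality defining $\prec$; the integral inequality $\int_X (h-u)^+\, d\mu$ available through Chong's Theorems \ref{Chong2} and \ref{Chong}; and the $L^1$--$L^\infty$ duality $\langle Sf, g\rangle = \langle f, S^*g\rangle$, which converts statements about $S$ acting on $L^1$ into statements about $S^*\chi_E$ for $\mu(E) < \infty$.

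\textbf{Forward direction ($S\in S\mathcal{D}(L^1(X))\Rightarrow Sf\prec f$).} Let $f\geq 0$ lie in $L^1(X)$. Since $S$ is Markov, $\int_X Sf\, d\mu = \int_X f\, d\mu$, which handles the equality clause of Definition \ref{DecRe}. By Theorems \ref{Chong2} and \ref{Chong}, the weak inequality $\int_0^s (Sf)^\downarrow dm \leq \int_0^s f^\downarrow dm$ reduces to verifying $\int_X (Sf - u)^+\, d\mu \leq \int_X (f - u)^+\, d\mu$ for every $u > 0$. Fix such $u$ and set $E := \{Sf > u\}$; Chebyshev gives $\mu(E) < \infty$ because $Sf\in L^1$. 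Put $\varphi := S^*\chi_E$, which is non-negative (since $S$ is positive) and bounded by $1$ a.e.\ (since $S^*$ is a contraction on $L^\infty$). Duality yields $\int_X (Sf - u)^+\, d\mu = \int_X f\varphi\, d\mu - u\mu(E)$. Splitting $f\varphi = \varphi(f-u) + u\varphi$, the pointwise bound $\varphi(f-u) \leq (f-u)^+$ (using $0\leq\varphi\leq 1$) integrates to at most $\int_X (f-u)^+ d\mu$, and the SDS inequality $\int_X \varphi\, d\mu \leq \mu(E)$ makes the $u\mu(E)$ terms cancel, producing the claim.

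\textbf{Backward direction ($Sf\prec f$ for all $f\geq 0\Rightarrow S\in S\mathcal{D}(L^1(X))$).} The equality clause of $\prec$ gives $\int_X Sf\, d\mu = \int_X f\, d\mu$ for every non-negative $f\in L^1(X)$; splitting $f = f^+ - f^-$ extends this to all of $L^1(X)$, so $S$ is Markov. To establish the SDS inequality, fix $E$ with $\mu(E)<\infty$ and an increasing exhaustion $A_n \uparrow X$ with $\mu(A_n)<\infty$ (which exists by $\sigma$-finiteness). From $S\chi_{A_n} \prec \chi_{A_n}$ and the Hardy--Littlewood inequality $\int_E h\, d\mu \leq \int_0^{\mu(E)} h^\downarrow dm$ (immediate from $\chi_E^\downarrow = \chi_{[0,\mu(E))}$), we obtain $\int_E S\chi_{A_n}\, d\mu \leq \int_0^{\mu(E)} \chi_{A_n}^\downarrow dm \leq \mu(E)$. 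Duality rewrites the left-hand side as $\int_{A_n} S^*\chi_E\, d\mu$, and monotone convergence as $n\to\infty$ delivers $\int_X S^*\chi_E\, d\mu \leq \mu(E)$, which is the SDS condition.

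\textbf{Main obstacle.} The pivotal creative step is in the forward direction: choosing the test set $E = \{Sf > u\}$ and verifying that $\varphi = S^*\chi_E$ lies in the right class, namely non-negative, essentially bounded by $1$, and with $L^1$-integral controlled by $\mu(E)$. Once this is in place, the inequality is a one-line manipulation. The only subtlety in the backward direction is that $S^*\chi_E$ need not a priori be in $L^1$; the exhaustion plus monotone convergence argument simultaneously proves $S^*\chi_E\in L^1(X)$ and bounds its integral, so this is handled together.
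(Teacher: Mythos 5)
Your argument is correct. Note that the paper itself does not prove this statement: it is imported verbatim from \cite{Semi} (Theorem 2.4 there), so there is no in-paper proof to compare against. Judged on its own, your proof is complete and self-contained. The forward direction --- reducing weak majorization to $\int_X (Sf-u)^+\,d\mu \le \int_X (f-u)^+\,d\mu$ via Theorems \ref{Chong2} and \ref{Chong}, then testing against $E=\{Sf>u\}$ and exploiting $0\le S^*\chi_E\le 1$ together with $\int_X S^*\chi_E\,d\mu\le\mu(E)$ to make the $u\mu(E)$ terms cancel --- is exactly the right mechanism, and the pointwise bound $\varphi(f-u)\le (f-u)^+$ is verified correctly in both cases $f\ge u$ and $f<u$. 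The backward direction via the Hardy--Littlewood maximal inequality $\int_E h\,d\mu\le\int_0^{\mu(E)}h^\downarrow\,dm$ applied to $h=S\chi_{A_n}$, followed by duality and monotone convergence along the exhaustion, correctly delivers both the integrability of $S^*\chi_E$ and the bound $\int_X S^*\chi_E\,d\mu\le\mu(E)$.

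One small point to tidy up: Theorem \ref{Chong2} as quoted in the paper is stated only for $\mu(X)$ infinite, whereas the theorem you are proving covers all $\sigma$-finite spaces. For $\mu(X)<\infty$ you need the finite-measure analogue of Chong's equivalence (also in \cite{Chong}, and classical for non-negative functions once the total integrals agree, since for $u<0$ one has $\int_X(h-u)^+\,d\mu=\int_X h\,d\mu-u\mu(X)$). This is a citation gap rather than a mathematical one, but it should be acknowledged if the proof is to stand for a general $\sigma$-finite $X$.
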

In this part we prove that when $(X,\mu)$ is a $\sigma$-finite measure space, $S_f=\{Sf:~~S\in S\mathcal{D}\} $ is equi-integrable. And then immediately give us relatively weakly compactness of $S_f$ when $(X,\mu)$ is a probability measure.
 \begin{definition}
Let $\mathcal{F}\subset L^1(X)$. $\mathcal{F}$ is said to be  equi-integrable if for every $\epsilon>0$ there exists some $\delta>0$ which for every $E\subset X$ with $\mu(E)<\delta$
\[\int_E|f|~d\mu<\epsilon, \qquad \forall f\in \mathcal{F}.\]
\end{definition}
For the proof of equi-integrability of $S_f=\{Sf:~~S\in S\mathcal{D}(L^1(X))\}$  when \linebreak $f\in L^1(X)$, first we prove the following lemma.
\begin{lemma}\label{eint}
Let $f,g\in L^1(X)$ be non negative functions with $f\prec g$ and also let $g\in L^\infty(X)$. Then for each $E\in X$ with $\mu(E)<\infty$
$$\int_E f~~d\mu\leq ||g||_\infty\mu(E)$$
\end{lemma}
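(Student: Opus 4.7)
The plan is to reduce the bound on $\int_E f \, d\mu$ to a bound on $\int_0^{\mu(E)} f^\downarrow \, dm$, push this up to $\int_0^{\mu(E)} g^\downarrow \, dm$ using the majorization hypothesis, and finally dominate $g^\downarrow$ by the essential supremum $\|g\|_\infty$.

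First I would invoke the classical Hardy–Littlewood type property of the decreasing rearrangement: for a non-negative $f \in L^1(X)$ and any measurable $E \subset X$ with $\mu(E) < \infty$, one has
$$\int_E f \, d\mu \;\leq\; \int_0^{\mu(E)} f^\downarrow \, dm.$$
This is a standard fact (and is proved, for example, in Chong's paper cited earlier in the excerpt), expressing that concentrating the mass on the largest values maximizes the integral over a set of prescribed measure. Taking $s = \mu(E)$ in the defining inequality of $f \prec g$ from Definition \ref{DecRe} gives
$$\int_0^{\mu(E)} f^\downarrow \, dm \;\leq\; \int_0^{\mu(E)} g^\downarrow \, dm.$$

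Next I would use that $g \in L^\infty(X)$ to conclude $g^\downarrow(s) \leq \|g\|_\infty$ for almost every $s \in [0, \mu(X)]$; this is immediate from the definition $g^\downarrow(s) = \inf\{t : d_g(t) \leq s\}$, since $d_g(t) = 0$ for every $t > \|g\|_\infty$. Integrating this pointwise bound over $[0, \mu(E)]$ yields
$$\int_0^{\mu(E)} g^\downarrow \, dm \;\leq\; \|g\|_\infty \, \mu(E).$$
Chaining the three inequalities gives the claim.

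The only non-routine ingredient is the first inequality $\int_E f \, d\mu \leq \int_0^{\mu(E)} f^\downarrow \, dm$, so if the authors have not already recorded it in Section \ref{Sec2}, the cleanest way to justify it on the fly is via the layer-cake formula together with Theorem \ref{Chong}: for every $u \geq 0$,
$$\int_E f \, d\mu \;\leq\; u\,\mu(E) + \int_E \max\{f-u, 0\}\, d\mu \;\leq\; u\,\mu(E) + \int_u^\infty d_f(t)\, dt,$$
and the right-hand side is precisely $\int_0^{\mu(E)} f^\downarrow \, dm$ after optimizing in $u$ (take $u = f^\downarrow(\mu(E))$ and use $d_f(t) \leq \mu(E)$ for $t \geq u$). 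No further obstacles are anticipated.
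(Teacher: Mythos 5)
Your proof is correct, but it takes a genuinely different route from the paper's. The paper never invokes the Hardy--Littlewood inequality $\int_E f\,d\mu \le \int_0^{\mu(E)} f^\downarrow\,dm$; instead it works with tail integrals of the distribution function: from $f\prec g$ and Theorem \ref{Chong2} it deduces $\int_s^\infty d_f(\tau)\,d\tau \le \int_s^\infty d_g(\tau)\,d\tau = 0$ for $s\ge \|g\|_\infty$, hence $d_f$ vanishes above $\|g\|_\infty$ (so in effect $\|f\|_\infty\le\|g\|_\infty$), and then concludes via $\int_E f\,d\mu=\int_0^\infty d_{f\chi_E}(s)\,ds$ together with $d_{f\chi_E}\le\min\{\mu(E),d_f\}$. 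You instead use the head-integral form of majorization directly at $s=\mu(E)$ and the trivial bound $g^\downarrow\le\|g\|_\infty$. Every step of your chain checks out: the inequality $\int_0^{\mu(E)}f^\downarrow\,dm\le\int_0^{\mu(E)}g^\downarrow\,dm$ is exactly Definition \ref{DecRe}, the bound $g^\downarrow\le\|g\|_\infty$ is immediate from $d_g(t)=0$ for $t>\|g\|_\infty$, and your layer-cake justification of the Hardy--Littlewood step via Theorem \ref{Chong} and the identity $\int_0^s f^\downarrow\,dm = us+\int_u^\infty d_f(t)\,dt$ at $u=f^\downarrow(s)$ is sound. Your route buys the sharper intermediate estimate $\int_E f\,d\mu\le\int_0^{\mu(E)}g^\downarrow\,dm$ and uses only weak majorization; the paper's route buys self-containedness, since it relies only on results already quoted in Section \ref{Sec2} and isolates the reusable fact that $f\prec_w g$ with $g$ essentially bounded forces $f$ to be essentially bounded by the same constant.
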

\begin{proof}
By definition, for each $s>0$
$$d_g(s)=\mu(\{x\in X;~~g(x)>s \}).$$
Therefore for each $s\geq ||g||_\infty$, $d_g(s)=0$. On the other since $f\prec g$ for each $s>0$, 
$$\int_s^{\infty} d_f(\tau)d\tau\leq \int_s^\infty d_g(\tau)d\tau.$$
Hence for each $s\geq ||g||_\infty$, 
$$\int_s^\infty d_f(\tau)d\tau\leq \int_s^\infty d_g(\tau)d\tau=0$$
and then for each $s\geq ||g||_\infty$, $d_f(\tau)=0$. On the other hand 
\begin{align*}
    d_{f\chi_{E}}(s) &=\mu(\{x\in X,~~f(x)\chi_{E}(x)>s\})\\
    &=\mu(\{x\in E,~~f(x)>s\})\\ &\leq \min \{\mu(E), d_f(s)\},
\end{align*}
Therefore for each $s\geq ||g||_\infty$, $d_{f\chi_E}(s)=0$ and then 
$$\int_Ef d\mu=\int_Xf\chi_E d\mu=\int_0^\infty d_{f\chi_E}(s)ds\leq \mu(E)||g||_\infty.$$
\end{proof}
Now, we are ready for the following key theorem.
\begin{theorem}\label{equiint}
Let $f\in L^1(X)$ be non negative function. Then $S_f$ is equi-integrable. 
\end{theorem}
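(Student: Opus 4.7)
The plan is to reduce equi-integrability to two estimates via a truncation argument, using the cited Bahrami et.\,al. theorem ($Sf\prec f$ whenever $S\in S\mathcal{D}(L^1(X))$ and $f\ge 0$) together with Lemma \ref{eint}.

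First I would fix $\epsilon>0$ and, since $f\in L^1(X)$ is nonnegative, choose $N>0$ large enough that
\[
\int_{\{f>N\}} f\,d\mu < \frac{\epsilon}{2},
\]
which is possible by absolute continuity of the integral of $f$. Then I would split $f$ as $f = g + h$, where $g := f\wedge N$ and $h := f - g = (f-N)^+$. Note that $0\le g\le f$ so $g\in L^1(X)$, and also $\|g\|_\infty \le N$, while $\|h\|_1<\epsilon/2$ by the choice of $N$. Both $g$ and $h$ are nonnegative, so by positivity of $S$, $Sf = Sg + Sh$ with $Sg,Sh\ge 0$.

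Next I would handle the two pieces separately for an arbitrary $S\in S\mathcal{D}(L^1(X))$ and any measurable $E\subseteq X$ with $\mu(E)<\infty$. For the bounded part, the cited theorem gives $Sg\prec g$, and since $g\in L^1(X)\cap L^\infty(X)$, Lemma \ref{eint} applies to yield
\[
\int_E Sg\,d\mu \le \|g\|_\infty\,\mu(E) \le N\mu(E).
\]
For the tail part, since $S$ is Markov and $h\ge 0$,
\[
\int_E Sh\,d\mu \le \int_X Sh\,d\mu = \int_X h\,d\mu = \|h\|_1 < \frac{\epsilon}{2}.
\]
Choosing $\delta := \epsilon/(2N)$, for any measurable $E$ with $\mu(E)<\delta$ I obtain
\[
\int_E |Sf|\,d\mu = \int_E Sf\,d\mu = \int_E Sg\,d\mu + \int_E Sh\,d\mu < \frac{\epsilon}{2} + \frac{\epsilon}{2} = \epsilon,
\]
uniformly in $S\in S\mathcal{D}(L^1(X))$, which is exactly equi-integrability of $S_f$.

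The only real subtlety is picking the decomposition so that one piece is controlled by the $L^\infty$-majorization lemma and the other by the Markov (integral-preserving) property; once the truncation $g=f\wedge N$ is in hand, the positivity of $S$ makes everything additive and the bounds fall out. There is no genuine obstacle beyond making sure $g\in L^1(X)\cap L^\infty(X)$ (which holds because $0\le g\le f$ and $g\le N$), so that Lemma \ref{eint} is legitimately applicable even when $\mu(X)=\infty$.
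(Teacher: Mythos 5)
Your proposal is correct and follows essentially the same route as the paper: the paper also picks a nonnegative $g\in L^1(X)\cap L^\infty(X)$ with $g\le f$ and $\|f-g\|_1<\epsilon/2$ (your explicit truncation $g=f\wedge N$ is just a concrete choice of such a $g$), controls $\int_E S(f-g)\,d\mu$ by the Markov property, and controls $\int_E Sg\,d\mu$ via $Sg\prec g$ and Lemma \ref{eint}. Your version has the minor advantage of avoiding the degenerate case $\|g\|_\infty=0$ in the choice of $\delta$, but the argument is the same.
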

\begin{proof}
For $\epsilon>0$, non negative function $g\in L^1(X)\cap L^\infty(X)$ has been selected in such a way that $g\leq f$ and $||f-g||<\dfrac{\epsilon}{2}$. Now set $\delta = \dfrac{\epsilon}{2||g||_\infty}$, then for each measurable subset $E\subset X$ with $\mu(e)<\delta$ and each $S\in S\mathcal{D}(L^1(X))$, 
\begin{align*}
   \int_E|Sf|d\mu &\leq \int_E|Sf-Sg|d\mu+   \int_E|Sg|d\mu \\&\leq    \int_XS|f-g|d\mu+\int_E Sg d\mu \\
   &= ||f-g||+\int_E Sg d\mu < \dfrac{\epsilon}{2}+\int_E Sg d\mu,
\end{align*}
and since $Sg\prec g$ by using the Lemma \ref{eint}, 
$\int_ESgd\mu\leq ||g||_\infty\mu(E)$ and therefore
$$ \int_E |Sf|d\mu\leq \dfrac{\epsilon}{2} + ||g||_\infty\mu(E) <\dfrac{\epsilon}{2} + ||g||_\infty\delta<\epsilon. $$
\end{proof}
For probability space $(X,\mu)$ which has many applications in quantum sciences, the following theorem  provides lots of significant equivalence conditions for equi-integrability of $S_f=\{Sf:~~S\in S\mathcal{D}(L^1(X))\}$. 

\begin{theorem}\cite[Theorem 5.2.9]{book}\label{bo}  Let  $(X, \mu)$ is a probability measure space.
$\mathcal{F}$  be a bounded set in $L^{1}(X)$ .  Then the following conditions on $\mathcal{F}$ are equivalent.

({\it i}) $\mathcal{F}$ {\it is relatively weakly compact};

({\it ii}) $\mathcal{F}$ {\it is equi-integrable};

({\it iii}) $\mathcal{F}$ {\it does not contain a basic sequence equivalent to the canonical basis of}

$l^{1}$;

({\it iv}) $\mathcal{F}$ {\it does not contain a complemented basic sequence equivalent to the canonical basis of} $l^{1};$

({\it v}) {\it for every sequence} $(A_{n})_{n=1}^{\infty}$ {\it of disjoint measurable sets},
$$
\lim_{n\rightarrow\infty}\sup_{f\in F}\int_{A_{n}}|f|d\mu=0.
$$
\end{theorem}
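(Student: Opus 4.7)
The statement collects several classical equivalences for bounded subsets of $L^1$ on a probability space, so my plan is to organize the proof around a core of two well-known theorems (Dunford--Pettis and Rosenthal's $\ell^1$ dichotomy) and fill in the straightforward implications around them.

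First I would handle the block (ii) $\Leftrightarrow$ (v), which is a direct manipulation of the definition of equi-integrability. The implication (ii) $\Rightarrow$ (v) is immediate: given $\varepsilon>0$, pick the $\delta$ from equi-integrability; since $\sum_n \mu(A_n)\le \mu(X)=1$ for a disjoint sequence, eventually $\mu(A_n)<\delta$, forcing the supremum of $\int_{A_n}|f|\,d\mu$ below $\varepsilon$. For (v) $\Rightarrow$ (ii), I would argue by contradiction: failure of equi-integrability yields $\varepsilon_0>0$, functions $f_n\in\mathcal F$, and sets $E_n$ with $\mu(E_n)\to 0$ but $\int_{E_n}|f_n|\ge\varepsilon_0$; then a standard exhaustion/truncation argument, using Chebyshev to discard the large-value part of each $f_n$ (where the mass is controlled by boundedness in $L^1$) and passing to nearly disjoint pieces of the $E_n$, produces a genuinely disjoint sequence violating (v).

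The central equivalence (i) $\Leftrightarrow$ (ii) is the Dunford--Pettis theorem, and this is the step I expect to be the most substantial. For (ii) $\Rightarrow$ (i), I would realize $\mathcal F$ inside the order interval of a single integrable dominating ``envelope,'' then apply a diagonal selection plus Vitali's convergence theorem to extract a weakly Cauchy subsequence from any sequence in $\mathcal F$; combined with boundedness and the Eberlein--Šmulian theorem this gives relative weak compactness. For (i) $\Rightarrow$ (ii), I would assume equi-integrability fails and, using the reduction from the previous paragraph, extract a sequence $(f_n)$ supported on almost disjoint sets with $\|f_n\mathbf 1_{E_n}\|_1\ge\varepsilon_0$; testing against the $L^\infty$ functional $g=\sum\pm\mathbf 1_{E_n}$ (chosen with alternating signs via the Hahn decomposition for cluster weak limits) would contradict weak convergence of any subsequence.

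Finally, I would close the loop through (iii) and (iv). The implication (i) $\Rightarrow$ (iii) follows because a sequence equivalent to the $\ell^1$-basis has no weakly Cauchy subsequence (the basis itself does not, and the equivalence is a linear homeomorphism on span), while relative weak compactness forces weakly convergent subsequences by Eberlein--Šmulian. The implication (iii) $\Rightarrow$ (iv) is tautological. The crux is (iv) $\Rightarrow$ (i), where I would invoke Rosenthal's $\ell^1$ theorem: if $\mathcal F$ is not relatively weakly compact, then some sequence in $\mathcal F$ has no weakly Cauchy subsequence, so Rosenthal produces an $\ell^1$-equivalent subsequence $(f_{n_k})$. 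The extra work is to upgrade this to a \emph{complemented} copy of $\ell^1$: here I would use the Kadec--Pe\l czy\'nski dichotomy, which in $L^1$ shows that any non-equi-integrable sequence admits a subsequence that is, after small perturbation, disjointly supported, and disjointly supported sequences in $L^1$ span a complemented copy of $\ell^1$ via the obvious norm-one projection built from the supports. This last step---extracting the disjointification and verifying complementation---is where I would expect most of the technical effort to lie.
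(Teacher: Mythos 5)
You should first note that the paper does not prove this statement at all: it is quoted verbatim as Theorem 5.2.9 of Albiac--Kalton \cite{book} and used as a black box (only the equivalence (i)$\Leftrightarrow$(ii) is actually invoked, in Corollary \ref{Wcompact}). So there is no in-paper proof to compare against; your outline is essentially a sketch of how the cited source argues, namely the circle Dunford--Pettis $+$ Rosenthal's $\ell^1$ theorem $+$ the Kadec--Pe\l czy\'nski/subsequence-splitting disjointification, and at that level the architecture is sound. The blocks (ii)$\Leftrightarrow$(v), (i)$\Rightarrow$(iii)$\Rightarrow$(iv), and the strategy for (iv)$\Rightarrow$(i) are all the standard ones.

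There is, however, one genuine flaw in your sketch of (ii)$\Rightarrow$(i): an equi-integrable family in $L^1$ need \emph{not} be contained in an order interval $[-g,g]$ for a single integrable envelope $g$. For instance, with disjoint sets $A_n$ of measure $2^{-n}$, the functions $f_n=\frac{2^n}{n}\chi_{A_n}$ form an equi-integrable, $L^1$-bounded family, yet $\sup_n f_n$ has integral $\sum_n 1/n=\infty$, so no integrable dominant exists. The correct replacement is the truncation argument: write $f=f\chi_{\{|f|\le M\}}+f\chi_{\{|f|>M\}}$; equi-integrability makes the second summand uniformly small in $L^1$-norm as $M\to\infty$ (using Chebyshev and the $L^1$-bound to control $\mu\{|f|>M\}$), while the truncated parts lie in the ball of radius $M$ of $L^\infty(\mu)$, which is weakly compact in $L^1$ of a probability space; Grothendieck's lemma (a set within $\varepsilon$ of a weakly compact set for every $\varepsilon>0$ is relatively weakly compact) then finishes. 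Your diagonal-plus-Vitali plan can be grafted onto this, but as written the ``dominating envelope'' step would fail. The remaining technical debt you correctly identify --- upgrading Rosenthal's $\ell^1$-copy to a \emph{complemented} one via disjointification and the norm-one projection built from the supporting sets --- is real work but is exactly what the cited reference carries out.
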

Without loss of generality Theorem \ref{bo} holds under  the more general assumption that $(X,\mu)$ is finite measurable space.\\
Now we have the following corollary based on Theorem \ref{equiint}.

\begin{corollary}\label{Wcompact}
 Let $f\in L^1(X)$ be non-negative function where $(X,\mu)$ is finite measure space. Then $S_f$ is relatively weakly compact.
\end{corollary}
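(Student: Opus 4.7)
The plan is to view this corollary as an almost immediate consequence of Theorem \ref{equiint} combined with the Dunford--Pettis type characterization recorded in Theorem \ref{bo}. Since Theorem \ref{bo} requires $(X,\mu)$ to be a probability space (and the authors note it remains valid for any finite measure space), and we are assuming $\mu(X)<\infty$, condition (ii) of that theorem will supply condition (i), which is exactly the conclusion we want.

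First I would verify the hypothesis of Theorem \ref{bo}, namely that $S_f$ is a bounded subset of $L^1(X)$. This is where the earlier lemma that every Markov operator (and hence every semi-doubly stochastic operator) has operator norm equal to $1$ is used: for any $S\in S\mathcal{D}(L^1(X))$ we have $\|Sf\|_1\le \|S\|\,\|f\|_1=\|f\|_1$, so $S_f$ sits inside the closed ball of radius $\|f\|_1$ in $L^1(X)$. Second, I would invoke Theorem \ref{equiint}, which states precisely that $S_f$ is equi-integrable for any non-negative $f\in L^1(X)$, regardless of whether the measure is finite or $\sigma$-finite.

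With these two ingredients in hand, condition (ii) of Theorem \ref{bo} is satisfied, and the theorem's equivalence (i)$\Leftrightarrow$(ii) immediately yields that $S_f$ is relatively weakly compact in $L^1(X)$. That completes the proof.

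I do not anticipate any real obstacle here: the nontrivial work has already been done in Theorem \ref{equiint} (which in turn relied on Lemma \ref{eint} and an $L^1\cap L^\infty$ truncation argument), and Theorem \ref{bo} is quoted directly from the literature. The only small point to be careful about is that Theorem \ref{bo} is stated for probability spaces, so one must invoke the authors' remark that the equivalence extends to arbitrary finite measure spaces before applying it under the hypothesis $\mu(X)<\infty$.
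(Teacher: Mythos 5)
Your proposal is correct and follows essentially the same route as the paper: boundedness of $S_f$ (the paper uses integral preservation of Markov operators to get $\|Sf\|_1=\|f\|_1$, while you use $\|S\|=1$; both work) plus the equi-integrability from Theorem \ref{equiint}, fed into the equivalence (i)$\Leftrightarrow$(ii) of Theorem \ref{bo}. Your explicit attention to extending Theorem \ref{bo} from probability spaces to finite measure spaces matches the remark the authors themselves make before the corollary.
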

\begin{proof}
For each $S\in S\mathcal{D}(L^1(X))$ and a fixed 
 $f\in L^1(X)$ by definition of $S\mathcal{D}(L^1(X))$ we have
$$\int_X Sf ~d\mu=\int_X f ~d\mu.$$ 
Then for each $S\in S\mathcal{D}(L^1(X))$ and a fixed 
 $f\in L^1(X)$ we have $||Sf||_1=||f||_1$ so $S_f$ is bounded. Therefore by using Theorem \ref{equiint} and equivalency of items $(i)$ and $(ii)$ in Theorem \ref{bo},  $S_f$ is relatively weakly compact.

\end{proof}
\section{ Characterization of majorization on $L^1(X)$}\label{Sec4}
The goal of this section is to give a full characterization of majorization, the answer to Mirsky's question, using semi-doubly stochastic operators.

We already recalled Theorem \ref{main} as an extension of 
Theorem \ref{majvect}, $(1)\iff (2)$. In this section we want to consider the relation between Sublinear functions and also convex functions with majorization on $L^1(X)$  which gain us the extension of Theorem \ref{majvect}, $(2)\iff (3)$. Also we will provide a strong relation between integral operators and semi-doubly stochastic operators, and finally a fully characterization of majorization on $L^1(X)$ when $X$ is $\sigma$-finite measure space.

\subsection{Sublinear and Convex functions}
In matrix space, Dahl proved an equivalent condition for matrix majorization using sublinear functionals (i.e convex and positively homogeneous maps). Moein et.al proved one side extension of Dahl's result as the following theorem.
\begin{theorem}\cite[Theorem 3.8]{SRS}
If $(X, \mu)$ is $\sigma$-finite measure space and $f,g\in L^1(X)$ such that $f$
is matrix majorized by $g$(i.e there exists a Markov operator $M$ that $f=Mg$), then
  \[
        \int_X\varphi(f) d\mu\leq \int_X \varphi(g) d\mu, \]
for all sublinear functionals $\varphi:\mathbb{R}\to\mathbb{R}^+$.
\end{theorem}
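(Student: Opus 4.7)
My plan is to reduce the statement to a concrete structural description of the sublinear functionals on $\mathbb{R}$ landing in $\mathbb{R}^{+}$, and then exploit positivity and integral-preservation of the Markov operator $M$ by splitting $g$ into its positive and negative parts.

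First I would pin down the shape of $\varphi$. Positive homogeneity forces $\varphi(0)=0$, $\varphi(t)=\varphi(1)\,t$ for $t\ge 0$, and $\varphi(t)=\varphi(-1)\,|t|$ for $t\le 0$. Setting $a:=\varphi(1)$ and $b:=\varphi(-1)$, the assumption $\varphi(\mathbb{R})\subseteq\mathbb{R}^{+}$ forces $a,b\ge 0$, so that
$$\varphi(t)=a\,t^{+}+b\,t^{-},\qquad t^{\pm}:=\max(\pm t,0).$$
Any such $\varphi$ is automatically convex (the one-sided slopes $a$ and $-b$ satisfy $-b\le a$), so this exhausts all sublinear $\varphi:\mathbb{R}\to\mathbb{R}^{+}$. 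It therefore suffices to verify the integral inequality for $\varphi(t)=t^{+}$ and $\varphi(t)=t^{-}$ separately and then combine linearly with coefficients $a,b\ge 0$.

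Next I would write $g=g^{+}-g^{-}$ with $g^{\pm}\ge 0$ in $L^{1}(X)$. Since $M$ is linear and positive, $Mg^{+}$ and $Mg^{-}$ are non-negative, and $Mg=Mg^{+}-Mg^{-}$. Applying the elementary pointwise fact that $x=u-v$ with $u,v\ge 0$ implies $x^{+}\le u$ and $x^{-}\le v$, I get almost everywhere
$$(Mg)^{+}\le Mg^{+},\qquad (Mg)^{-}\le Mg^{-}.$$
Integrating these and invoking the Markov property $\int_{X}Mh\,d\mu=\int_{X}h\,d\mu$ yields $\int_{X}(Mg)^{+}d\mu\le\int_{X}g^{+}d\mu$ and $\int_{X}(Mg)^{-}d\mu\le\int_{X}g^{-}d\mu$. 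Taking the $(a,b)$-linear combination with $f=Mg$ then gives
\begin{align*}
\int_{X}\varphi(f)\,d\mu
&=a\int_{X}(Mg)^{+}d\mu+b\int_{X}(Mg)^{-}d\mu\\
&\le a\int_{X}g^{+}d\mu+b\int_{X}g^{-}d\mu=\int_{X}\varphi(g)\,d\mu,
\end{align*}
which is the claim.

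I do not expect a genuine obstacle. The only subtle point is recognising that sublinearity together with the non-negative codomain collapses $\varphi$ onto a two-parameter family on $\mathbb{R}$; after that, positivity of $M$, the identities for $(\cdot)^{\pm}$, and the single identity $\int_{X}Mh\,d\mu=\int_{X}h\,d\mu$ do the work. Notice that the argument never uses $\sigma$-finiteness of $(X,\mu)$ and applies verbatim to any positive integral-preserving operator on $L^{1}(X)$.
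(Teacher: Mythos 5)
Your proof is correct. Note first that the paper does not prove this statement itself: it is quoted from \cite{SRS} (Theorem 3.8 there), where the argument runs through the machinery the present paper also recalls --- approximate the Markov operator on $\mathrm{span}\{g\}$ by stochastic integral operators (Lemma \ref{lemma3.5} / Theorem \ref{finite aprox}) and then apply the Dahl-type kernel inequality $\varphi\bigl(\int K(x,y)g(y)\,d\nu(y)\bigr)\le\int K(x,y)\varphi(g(y))\,d\nu(y)$, which uses subadditivity plus positive homogeneity and then the stochasticity $\int_X K(x,y)\,d\mu(x)=1$; this is where $\sigma$-finiteness enters. Your route is genuinely different and more elementary: you exploit the fact that the domain of $\varphi$ is one-dimensional, so that positive homogeneity together with the codomain $\mathbb{R}^{+}$ collapses $\varphi$ to $\varphi(t)=a t^{+}+b t^{-}$ with $a,b\ge 0$, reducing everything to the two inequalities $\int_X (Mg)^{\pm}\,d\mu\le\int_X Mg^{\pm}\,d\mu=\int_X g^{\pm}\,d\mu$, which follow from positivity and integral preservation alone. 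Each step checks out (the structural classification of $\varphi$, the pointwise bound $(u-v)^{\pm}\le u$ resp.\ $v$, and the integrability of $\varphi(f)\le\max(a,b)|f|$), and your observation that $\sigma$-finiteness is not needed is a genuine gain. The trade-off is that the reference's kernel argument survives when the sublinear functional acts on $\mathbb{R}^{n}$-valued data (the setting of Dahl's matrix majorization), where no such two-parameter classification is available, whereas your shortcut is specific to scalar-valued $L^1$ --- which is all the present paper uses.
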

Since $S\mathcal{D}(L^1(X))\subset \mathcal{M}(L^1(X))$ we simply have the following corollary.
\begin{corollary}\label{sublinear}
Let $(X,\mu)$ be a $\sigma$-finite measure space, $f,g\in L^1(X,\mu)$ and there exists $S\in S\mathcal{D}(L^1(X))$ with $f=Sg$. Then for all sublinear functionals $\varphi:\mathbb{R}\to \mathbb{R}^+$,
    \begin{equation}\label{subeq}
        \int_X\varphi(f) d\mu\leq \int_X \varphi(g) d\mu. \end{equation}
\end{corollary}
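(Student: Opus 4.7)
The plan is essentially a one-line application of the theorem that immediately precedes the corollary. The key observation to invoke is the inclusion $S\mathcal{D}(L^1(X))\subseteq \mathcal{M}(L^1(X))$ recorded in \eqref{incl}: any semi-doubly stochastic operator is in particular a Markov operator, so the hypothesis of Theorem \textup{[3.8 of SRS]} (the matrix majorization version) is automatically satisfied as soon as the hypothesis of the corollary holds.

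More concretely, I would proceed as follows. First, assume $f=Sg$ for some $S\in S\mathcal{D}(L^1(X))$. By the inclusion \eqref{incl}, $S\in\mathcal{M}(L^1(X))$, which by the definition of matrix majorization used in the cited theorem means precisely that $f$ is matrix majorized by $g$. Second, apply Theorem \textup{[SRS, 3.8]} verbatim to conclude that
\begin{equation*}
\int_X\varphi(f)\,d\mu\leq \int_X\varphi(g)\,d\mu
\end{equation*}
for every sublinear functional $\varphi:\mathbb{R}\to\mathbb{R}^{+}$. Nothing further is required, because the target inequality of the corollary is literally the conclusion of the cited theorem.

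There is no genuine obstacle here; the corollary is a bookkeeping consequence of the inclusion of operator classes. The only thing worth flagging for the reader is precisely which notion of ``matrix majorization'' the cited theorem uses, namely the existence of a Markov operator mapping $g$ to $f$, so that the syntactic match between the hypothesis of the corollary and the hypothesis of the invoked theorem is transparent. Beyond this verification, the proof is immediate.
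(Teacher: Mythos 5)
Your proposal is correct and is exactly the paper's argument: the paper derives the corollary in one line from the inclusion $S\mathcal{D}(L^1(X))\subseteq\mathcal{M}(L^1(X))$ together with Theorem 3.8 of \cite{SRS}. Nothing further is needed.
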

But for converse of the above corollary we have a counterexample.
\begin{example}
Let $X=[0,\infty)$, $\mu$ be Lebesgue measure and $f=3\chi_{[0,1)}+\cfrac{1}{2}\chi_{[1,2)}$ and $g=2\chi_{[0,2)}$. Then neither $f\prec_w g$ nor $g\prec_w f$. Now let $\varphi:\mathbb{R}\to \mathbb{R}^+$ be an arbitrary sublinear function then by its convexity and positively homogeneous property we have 
$$\varphi(f)=3\varphi(\chi_{[0,1)})+\cfrac{1}{2}\varphi(\chi_{[1,2)}),~ \text{and}~~\varphi(f)=2\varphi(\chi_{[0,2)}).$$
And then by linearity of integral we have
$$
\int_X\varphi(f)d\mu=3\int_{0}^{1}\varphi(1)d\mu+\cfrac{1}{2}\int_{1}^{2}\varphi(1)d\mu\leq 2\int_{0}^{2}\varphi(1)=\int_X\varphi(g)d\mu.
$$

\end{example}
Because each sublinear function is convex a natural question can arise, is the converse of Corollary \ref{sublinear} true for convex functions? or can we characterize majorization relation in $\sigma$-finite measure space with inequality in Corollary \ref{sublinear} based on convex functions? Chong answered this question positively for weak majorization but with a restriction for convex functions, convex functions have to be increasing.
\begin{theorem}\cite[Theorem 2.1]{Chong}\label{Chong0}
Let $(X,\mu)$ be infinite measure space and $f,g\in L^1(X)$ be non-negative. Then $f\prec_w g$ i.e
\[
\int_0^s f^\downarrow dm \leq  \int_0^s g^\downarrow dm,\quad \forall\, 0\leq  s\leq  \infty\]
if and only if for all non-negative increasing convex functions $\phi:\mathbb{R}^{+}\to\mathbb{R}^{+}$ with $\phi(0)=0$, 
\[\int_X \phi(f)d\mu\leq \int_X \phi(g)d\mu.\]
\end{theorem}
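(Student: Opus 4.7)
The plan is to prove both directions by combining the layer-cake identity of Theorem \ref{Chong} with the equivalent form of weak majorization from Theorem \ref{Chong2}. The key observation is that every admissible $\phi$ can be represented as a positive superposition of the building blocks $t\mapsto (t-u)_+$ plus a positive linear term, and these building blocks are precisely what Theorem \ref{Chong} evaluates. This reduces both implications to tracking a single inequality through an integral representation.

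For the reverse direction ($\Leftarrow$), which is elementary, I would test the hypothesis against the family $\phi_u(t):=(t-u)_+=\max\{t-u,0\}$ for each $u\geq 0$. Each $\phi_u$ is non-negative, non-decreasing, convex, and satisfies $\phi_u(0)=0$. The hypothesis yields
\[
\int_X (f-u)_+\,d\mu \leq \int_X (g-u)_+\,d\mu,
\]
and by Theorem \ref{Chong} this is exactly $\int_u^\infty d_f\,dt \leq \int_u^\infty d_g\,dt$ for every $u\geq 0$. Theorem \ref{Chong2} then translates this into $\int_0^s f^\downarrow\,dm \leq \int_0^s g^\downarrow\,dm$ for all $s$, which is $f\prec_w g$.

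For the forward direction ($\Rightarrow$), I would first establish the integral representation
\[
\phi(t) = \phi'_+(0)\,t \;+\; \int_{(0,\infty)} (t-v)_+ \, d\phi'_+(v),
\]
obtained from $\phi(t)=\int_0^t \phi'_+(u)\,du$ by writing the non-decreasing right derivative as $\phi'_+(u)=\phi'_+(0)+\int_{(0,u]} d\phi'_+$ and applying Fubini. Substituting $f(x)$ and integrating over $X$, another application of Fubini (justified by non-negativity of $f$, $\phi'_+$, and $d\phi'_+$) gives
\[
\int_X \phi(f)\,d\mu = \phi'_+(0)\int_X f\,d\mu + \int_{(0,\infty)} \left(\int_X (f-v)_+\,d\mu\right) d\phi'_+(v),
\]
and similarly for $g$. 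The hypothesis $f\prec_w g$ yields $\int_X f\,d\mu\leq \int_X g\,d\mu$ by letting $s\to\infty$ in the definition (using $f,g\in L^1$), while Theorem \ref{Chong2} combined with Theorem \ref{Chong} gives $\int_X (f-v)_+\,d\mu \leq \int_X (g-v)_+\,d\mu$ for every $v\geq 0$. Since $\phi'_+(0)\geq 0$ and $d\phi'_+$ is a non-negative measure on $(0,\infty)$, termwise comparison yields $\int_X \phi(f)\,d\mu \leq \int_X \phi(g)\,d\mu$.

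The main technical obstacle is the forward direction: making precise the decomposition of $\phi$ via its non-decreasing right derivative (classical from convex analysis but requiring care at $v=0$ and at possible jump points) and justifying the two Fubini applications in the case where $\int_X \phi(g)\,d\mu=+\infty$, where the inequality is vacuous yet one still wants the algebraic identities to be meaningful. The reverse direction, by contrast, is essentially immediate from Theorems \ref{Chong} and \ref{Chong2}, and functions as a sanity check that the family $\{(t-u)_+\}_{u\geq 0}$ already captures the entire information of weak majorization.
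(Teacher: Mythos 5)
The paper does not actually prove this statement; it is quoted verbatim from Chong, so there is no internal proof to compare against. Your argument is correct and is essentially a reconstruction of Chong's original proof: both directions hinge on the family of angle functions $(t-u)_+$, the layer-cake identity of Theorem \ref{Chong}, and the distribution-function reformulation of weak majorization in Theorem \ref{Chong2}, with the forward direction completed by the representation $\phi(t)=\phi'_+(0)\,t+\int_{(0,\infty)}(t-v)_+\,d\phi'_+(v)$ for increasing convex $\phi$ with $\phi(0)=0$, followed by a termwise comparison. Two points you should make explicit but which are not gaps: (i) Theorem \ref{Chong2} requires $\int_u^\infty d_f\,dm\leq\int_u^\infty d_g\,dm$ for \emph{all} real $u$, whereas testing against $\phi_u$ only yields it for $u\geq 0$; this suffices because for $u<0$ both sides equal $+\infty$ (as $f,g\geq 0$ and $\mu(X)=\infty$, so $d_f(t)=\mu(X)$ for $t<0$). (ii) In the forward direction every integrand in sight is non-negative, so the interchange of integrals is Tonelli rather than Fubini and needs no integrability hypothesis; the worry you raise about the case $\int_X\phi(g)\,d\mu=+\infty$ therefore evaporates, since the identity $\int_X\phi(f)\,d\mu=\phi'_+(0)\int_X f\,d\mu+\int_{(0,\infty)}\bigl(\int_X(f-v)_+\,d\mu\bigr)\,d\phi'_+(v)$ holds in $[0,+\infty]$ unconditionally. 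Finally, the endpoint $s=\infty$ of $f\prec_w g$ in the reverse direction is covered by the choice $u=0$, i.e.\ $\phi(t)=t$, together with the equimeasurability identity $\int_0^\infty f^\downarrow\,dm=\int_X f\,d\mu$.
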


\subsection{Semi-doubly Stochastic and Integral Operators}\label{IntegralOp}

Stochastic (respectively doubly stochastic) integral operators as special classes of linear operators which are defined as follows are Markov (respectively doubly stochastic) operators.

\begin{definition}\cite{SRS}
A measurable function $K:X\times Y\to[0,\infty)$ is called \emph{stochastic kernel} if $\int_X K(x,y)d\mu(x)=1$ for almost all $y\in Y$, and is called a \emph{doubly stochastic kernel} if  stochastic  kernel has the additional property that $\int_Y K(x,y)d\nu(y)=1$ for almost all $x\in X$.
\end{definition}

\begin{definition}\cite{SRS} An integral operator $A:L^1(Y)\to L^1(X)$ defined by $Ag=\int_{Y}K(x,y)g(y)d\nu(y)$ is said to be a \emph{stochastic integral operator} (resp.\ \emph{doubly stochastic integral operator}) if $K(x,y)$ is stochastic kernel (resp.\ doubly stochastic kernel).
\end{definition}

Each stochastic integral operator is a Markov operator (stochastic operator), and also each doubly stochastic integral operator is a doubly stochastic operator. But, simply by considering the identity operator which is a doubly stochastic operator, it is clear that the converse of both statements is false. 
In spite of that in \cite{SRS} is proven that a  Markov (resp.\ doubly stochastic) operator $D:L^1(Y)\to L^1(X)$ on a finite dimensional subspace $F$ of $L^1(Y)$ can be approximated by stochastic (resp.\ doubly stochastic) integral operators when $(X,\mu )$ and $(Y,\nu)$ be a $\sigma$-finite (resp.\ finite) measure spaces.
\begin{theorem}\label{finite aprox}
If $(X,\mu )$ and $(Y,\nu)$ are finite measure spaces, then $D$ as a doubly stochastic operator from $L^1(Y)$ to $L^1(X)$ on a finite subspace $F$ of $L^1(Y)$ can be approximated by doubly stochastic integral operators.
\end{theorem}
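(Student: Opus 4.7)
The plan is to construct, for each pair of finite measurable partitions $\mathcal{P}=\{A_i\}_{i=1}^N$ of $X$ and $\mathcal{Q}=\{B_j\}_{j=1}^M$ of $Y$, a doubly stochastic kernel by cell-wise averaging of $D$, and then to refine the partitions so that the associated integral operators converge to $D$ in $L^1$-norm on every element of the finite-dimensional subspace $F$.

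\textbf{Construction.} Fix $\epsilon>0$ and a basis $f_1,\ldots,f_n$ of $F$. Using density of simple functions in $L^1(Y)$, I select a partition $\mathcal{Q}=\{B_j\}_{j=1}^M$ of $Y$ and simple functions $\tilde f_k=\sum_j a_{jk}\chi_{B_j}$ with $\|f_k-\tilde f_k\|_1<\epsilon$ for each $k$. For any partition $\mathcal{P}=\{A_i\}$ of $X$, set
$$c_{ij}=\int_{A_i} D\chi_{B_j}\,d\mu,\qquad K_{\mathcal{P},\mathcal{Q}}(x,y)=\sum_{i,j}\frac{c_{ij}}{\mu(A_i)\nu(B_j)}\chi_{A_i}(x)\chi_{B_j}(y).$$
Positivity of $D$ gives $c_{ij}\ge 0$. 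Double-stochasticity of $K_{\mathcal{P},\mathcal{Q}}$ then reduces to two marginal identities: $\sum_i c_{ij}=\int_X D\chi_{B_j}\,d\mu=\nu(B_j)$ (the Markov property), and $\sum_j c_{ij}=\int_{A_i} D\chi_Y\,d\mu=\int_{A_i}\chi_X\,d\mu=\mu(A_i)$ (the dual Markov identity $D\mathbf{1}_Y=\mathbf{1}_X$, which is precisely where the finiteness of $\nu(Y)$ and $\mu(X)$ is used).

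\textbf{Approximation via conditional expectation.} Let $A_{\mathcal{P},\mathcal{Q}}$ be the integral operator with kernel $K_{\mathcal{P},\mathcal{Q}}$. A direct computation yields
$$A_{\mathcal{P},\mathcal{Q}}\chi_{B_j}=\sum_i\frac{c_{ij}}{\mu(A_i)}\chi_{A_i},$$
which is exactly the conditional expectation of $D\chi_{B_j}$ onto the finite $\sigma$-algebra generated by $\mathcal{P}$. Taking an increasing sequence $\mathcal{P}_\ell$ of partitions whose generated $\sigma$-algebras exhaust the portion of the $\sigma$-algebra of $X$ needed to measure each $D\chi_{B_j}$, martingale convergence in $L^1$ gives $A_{\mathcal{P}_\ell,\mathcal{Q}}\chi_{B_j}\to D\chi_{B_j}$, hence $A_{\mathcal{P}_\ell,\mathcal{Q}}\tilde f_k\to D\tilde f_k$. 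Each $A_{\mathcal{P}_\ell,\mathcal{Q}}$ is positive and integral-preserving, so it is an $L^1$-contraction, and the triangle inequality
$$\|A_{\mathcal{P}_\ell,\mathcal{Q}}f_k-Df_k\|_1\le \|f_k-\tilde f_k\|_1+\|A_{\mathcal{P}_\ell,\mathcal{Q}}\tilde f_k-D\tilde f_k\|_1+\|D(\tilde f_k-f_k)\|_1$$
can be driven below $3\epsilon$ for $\ell$ large. Since $F$ is finite-dimensional, this transfers uniformly to all of $F$.

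\textbf{Main obstacle.} The delicate step is the invocation of martingale convergence in $L^1$: one must exhibit an increasing sequence of finite partitions of $X$ whose generated $\sigma$-algebras eventually resolve the finitely many functions $D\chi_{B_1},\ldots,D\chi_{B_M}$, which requires some separability of the relevant sub-$\sigma$-algebra of $X$. A secondary subtlety is the order in which the approximation parameters are chosen: $\mathcal{Q}$ must be fixed first (governed by $F$ and $\epsilon$), and only afterwards may $\mathcal{P}$ be refined; reversing this order would leave the simple-function approximation of $f_k$ unsettled while the kernel varies.
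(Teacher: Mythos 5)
Your construction is correct, but it follows a genuinely different route from the one the paper relies on (the paper gives no self-contained proof; it cites Theorem 3.7 of \cite{SRS}, whose technique is reproduced in the semi-doubly stochastic setting in Theorem \ref{integral op}). The paper's argument uses a single partition $P$ of the codomain $X$ only: it forms $G_P D$, where $G_P$ is the averaging (conditional-expectation) operator of Proposition \ref{Perfect Example}, shows via the Riesz representation of the functionals $f\mapsto (G_P D f)(x)$ that $G_P D$ is automatically an integral operator with a (not explicitly computed) kernel $K_P(x,y)=h_x(y)$, and then gets convergence $G_{P_n}Df\to Df$ for $f\in F$ by applying Lemma \ref{lemma3.5} to the finite-dimensional image $D(F)$. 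You instead discretize both $X$ and $Y$, which produces the explicit piecewise-constant kernel $K_{\mathcal{P},\mathcal{Q}}$; your operator is exactly $G_{\mathcal{P}}DG_{\mathcal{Q}}$, so you pay for the explicitness with the extra simple-function approximation of the $f_k$ and the parameter-ordering care you describe. Both marginal identities you verify are right, and the place where finiteness of the measures enters ($D\mathbf{1}_Y=\mathbf{1}_X$, equivalently $\sum_j c_{ij}=\mu(A_i)$) is correctly identified; one should only add that cells of measure zero must be merged or discarded before dividing by $\mu(A_i)$ and $\nu(B_j)$. The ``main obstacle'' you flag is not actually a gap: the sub-$\sigma$-algebra of $X$ generated by the finitely many functions $D\chi_{B_1},\dots,D\chi_{B_M}$ is countably generated, hence (mod null sets) generated by an increasing sequence of finite partitions, so the $L^1$ martingale convergence theorem applies; alternatively, this is precisely the content of Lemma \ref{lemma3.5} applied to $\mathrm{span}\{D\chi_{B_1},\dots,D\chi_{B_M}\}$, which lets you bypass martingale theory entirely. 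In short: your proof is more elementary and yields an explicit kernel; the paper's (cited) proof is shorter, needs only one partition, and avoids approximating the elements of $F$ by simple functions.
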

\begin{proof}
See Theorem $3.7$ in \cite{SRS}.
\end{proof}
By using this fact from \cite[Proposition 2.6]{Semi} that for the finite measure space $(X,\mu)$ the set of semi-doubly stochastic operators and the set of doubly stochastic operator coincide, a natural question which is our first aim in this section arises. Can we extend the Theorem \ref{finite aprox} to $\sigma$-finite measure space?
 For this purpose, we need a class of integral operators between stochastic integral operators and doubly stochastic integral operators.
\begin{definition}
A measurable functional $K:X\times Y\to[0,\infty)$ is called \emph{semi-doubly stochastic kernel} if $\int_X K(x,y)d\mu(x)=1$ for almost all $y\in Y$, and \linebreak $\int_Y K(x,y)d\nu(y)\leq 1$ for almost all $x\in X$.
\end{definition}
\begin{definition} An integral operator $A:L^1(Y)\to L^1(X)$ defined by $Ag=\int_{Y}K(x,y)g(y)d\nu(y)$ is said to be a \emph{semi-doubly stochastic integral operator} if $K(x,y)$ is semi-doubly stochastic kernel.
\end{definition}
\begin{lemma}\cite[Lemma 3.5]{SRS}\label{lemma3.5}  Let $(X,\mu)$ be a $\sigma$-finite measure space and let $F$ be a finite dimensional subspace of $L^1(X)$.  Then there exists a sequence of partitions $\{P_n\}_{n=1}^\infty$ of $X$  into disjoint sets of finite measure such that $\{{G_P}_nf\}_{n=1}^\infty$ converges to $f$ in the $L^1$ norm for all $f\in F$. \end{lemma}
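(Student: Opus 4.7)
The plan is to exploit the fact that $G_P$ fixes any function that is constant on each cell of the partition $P$: if $s$ takes the value $c_A$ on a cell $A$ of $P$, then $\frac{1}{\mu(A)}\int_A s\,d\mu=c_A$, so $G_P s=s$. Hence it suffices to construct, for each $n$, a partition $P_n$ of $X$ into sets of positive finite measure whose generated $\sigma$-algebra contains $L^1$-simple approximations, within $1/(4n)$ each, of every vector in a fixed basis $\{f_1,\dots,f_d\}$ of $F$.

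Concretely, fix an exhaustion $X=\bigcup_k X_k$ with $X_k\subseteq X_{k+1}$ and $\mu(X_k)<\infty$, which exists by $\sigma$-finiteness. Given $n$, choose $k_n$ so large that $\|f_i\chi_{X\setminus X_{k_n}}\|_1<1/(4n)$ for every $i=1,\ldots,d$, using dominated convergence. For each $i$, pick a simple function $s_{i,n}=\sum_j c_{ij}^{(n)}\chi_{B_{ij}^{(n)}}$ supported in $X_{k_n}$ with $\|f_i\chi_{X_{k_n}}-s_{i,n}\|_1<1/(4n)$. Let $\mathcal{B}_n$ be the finite partition of $X_{k_n}$ generated by all the atoms $B_{ij}^{(n)}$ (discarding any null leftover), and let $\mathcal{C}_n$ be any partition of $X\setminus X_{k_n}$ into countably many disjoint sets of positive finite measure, again available by $\sigma$-finiteness. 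Set $P_n:=\mathcal{B}_n\cup\mathcal{C}_n$.

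By construction each $s_{i,n}$ is constant on every cell of $\mathcal{B}_n$ and vanishes on every cell of $\mathcal{C}_n$, so $G_{P_n}s_{i,n}=s_{i,n}$ pointwise. Combining this with the contractivity $\|G_{P_n}\|\le 1$ on $L^1(X)$, which is immediate from the defining formula in Proposition \ref{Perfect Example}, gives
\begin{equation*}
\|G_{P_n}f_i-f_i\|_1 \le \|G_{P_n}(f_i-s_{i,n})\|_1+\|s_{i,n}-f_i\|_1 \le 2\|f_i-s_{i,n}\|_1 < \tfrac{1}{n},
\end{equation*}
where the last inequality uses $\|f_i-s_{i,n}\|_1\le\|f_i\chi_{X_{k_n}}-s_{i,n}\|_1+\|f_i\chi_{X\setminus X_{k_n}}\|_1<1/(2n)$. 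Extending to a general $f=\sum a_i f_i\in F$ is then immediate from linearity and a uniform bound on $\sum|a_i|$ in the finite-dimensional $F$.

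The main obstacle, and the reason the statement needs more than a single-function approximation argument, is the requirement that one partition serve all basis vectors simultaneously while using only cells of finite measure in a space that may itself have infinite measure. This is resolved by taking the common refinement of finitely many simple-function level-set partitions at each scale $n$, and by pairing the exhaustion $\{X_k\}$ with a $\sigma$-finite decomposition of its complement so that every cell of every $P_n$ has finite positive measure.
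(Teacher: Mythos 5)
Your proof is correct. Note that this paper does not actually prove the lemma; it is imported verbatim from \cite[Lemma 3.5]{SRS}, and the argument given there is of the same type as yours: $G_{P}$ acts as a conditional expectation onto the $\sigma$-algebra generated by $P$, hence is an $L^1$-contraction fixing every simple function measurable with respect to $P$, and one then chooses $P_n$ to refine the level sets of simple approximants of a basis of $F$ while exhausting $X$ by sets of finite measure. The only point to tidy is that the paper's standing convention for $G_P$ requires every cell to satisfy $0<\mu(A_n)<\infty$, so rather than ``discarding'' null atoms of $\mathcal{B}_n$ you should merge them into a neighbouring cell of positive measure (this changes nothing in $L^1$); with that adjustment, each step --- the identity $G_{P_n}s_{i,n}=s_{i,n}$, the contractivity bound $\|G_{P_n}\|\le 1$, the tail estimate via dominated convergence, and the passage from the basis to all of $F$ by finite dimensionality --- is sound.
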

\begin{theorem}\label{integral op}
Let $(X,\mu )$ 
be $\sigma$-finite measure space, and $S:L^1(X)\to L^1(X)$ be a semi-doubly stochastic operator. Then there exists a sequence of semi-doubly stochastic integral operators on $L^{1}(X)$ which converge to $S$ on a finite dimensional subspace $F$ of $L^1(X)$.
\end{theorem}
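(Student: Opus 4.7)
The plan is to adapt the finite-measure argument of Theorem \ref{finite aprox} to the $\sigma$-finite setting by sandwiching $S$ between two conditional-expectation operators of the type $G_P$ from Proposition \ref{Perfect Example}. Concretely, given the finite dimensional subspace $F\subseteq L^1(X)$ on which the approximation should take place, first enlarge $F$ to $\widetilde F:=F+S(F)$, which is still finite dimensional, and apply Lemma \ref{lemma3.5} to $\widetilde F$ to obtain a sequence of partitions $P_n=\{A_m^n\}_{m\in\mathbb{N}}$ of $X$ into sets of finite measure such that $G_{P_n}h\to h$ in $L^1(X)$ for every $h\in\widetilde F$. Take $\widetilde S_n:=G_{P_n}\,S\,G_{P_n}$ as the candidate approximating sequence.

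Next I would verify that each $\widetilde S_n$ is a semi-doubly stochastic integral operator. Unfolding the formulas for $\Phi_{P_n}$ and $\Psi_{P_n}$ in \eqref{5}--\eqref{6} yields
\[
\widetilde S_n f(x)=\int_X K_n(x,y)\,f(y)\,d\mu(y),\qquad K_n(x,y)=\sum_{k,m}\frac{c_{km}^n}{\mu(A_k^n)\,\mu(A_m^n)}\,\chi_{A_k^n}(x)\chi_{A_m^n}(y),
\]
where $c_{km}^n:=\int_{A_k^n}S\chi_{A_m^n}\,d\mu\ge 0$. Integrating $K_n$ in $x$, the Markov property $\int_X S\chi_{A_m^n}\,d\mu=\mu(A_m^n)$ gives $\sum_k c_{km}^n=\mu(A_m^n)$, so $\int_X K_n(x,y)\,d\mu(x)=\sum_m\chi_{A_m^n}(y)=1$ a.e. For the dual direction, passing to the adjoint via $c_{km}^n=\int_{A_m^n}S^*\chi_{A_k^n}\,d\mu$ and summing over $m$ along the partition produces
\[
\int_X K_n(x,y)\,d\mu(y)=\sum_k\frac{\chi_{A_k^n}(x)}{\mu(A_k^n)}\int_X S^*\chi_{A_k^n}\,d\mu\le\sum_k\chi_{A_k^n}(x)=1,
\]
where the semi-doubly stochastic inequality $\int_X S^*\chi_{A_k^n}\,d\mu\le\mu(A_k^n)$ is invoked. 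Hence $K_n$ is a semi-doubly stochastic kernel, so $\widetilde S_n$ is a semi-doubly stochastic integral operator on $L^1(X)$.

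For convergence on $F$, I would split
\[
\widetilde S_n f - Sf=G_{P_n}\,S\,(G_{P_n}f-f)+(G_{P_n}Sf-Sf).
\]
The second summand tends to $0$ in $L^1(X)$ because $Sf\in\widetilde F$ and $G_{P_n}\to\mathrm{Id}$ on $\widetilde F$ by Lemma \ref{lemma3.5}. For the first summand, since both $G_{P_n}$ and $S$ are Markov and hence have operator norm at most $1$, its $L^1$ norm is dominated by $\|G_{P_n}f-f\|_1\to 0$. This yields $\widetilde S_n f\to Sf$ in $L^1(X)$ for every $f\in F$.

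The most delicate step is the kernel inequality $\int_X K_n(x,y)\,d\mu(y)\le 1$. The dual semi-doubly stochastic condition is naturally phrased in terms of $S^*$ tested against $\chi_E$, so one must combine a duality swap with the countable additivity along the partition $\{A_m^n\}_m$ to convert the primal sum into a single integral of $S^*\chi_{A_k^n}$ on the dual side. This is also what clarifies why the construction does not upgrade to a doubly stochastic kernel in general: the semi-doubly stochastic class is precisely what is preserved by the sandwiching $G_{P_n}\,S\,G_{P_n}$, which is morally why it is the correct class for the $\sigma$-finite extension.
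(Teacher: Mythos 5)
Your argument is correct, and it reaches the same destination as the paper's proof by a recognizably similar but not identical route. The paper approximates $S$ by the one-sided compositions $G_{P_n}S$, where the partitions $P_n$ come from Lemma \ref{lemma3.5} applied to the forward image $S(F)$; it then establishes that $G_{P_n}S$ is an integral operator abstractly, by noting that for fixed $x$ the map $f\mapsto (G_{P_n}Sf)(x)$ is a bounded functional on $L^1(X)$ and invoking Riesz representation to produce a kernel $h_x(y)$, after which the stochastic condition on the kernel is checked via Fubini and the semi-doubly stochastic inequality via a limiting argument along an exhausting sequence $B_n$. You instead use the two-sided sandwich $G_{P_n}SG_{P_n}$ (forcing you to enlarge $F$ to $F+S(F)$ before applying Lemma \ref{lemma3.5}, since you need $G_{P_n}\to\mathrm{Id}$ on both $F$ and $S(F)$), which hands you a completely explicit, manifestly jointly measurable kernel $K_n(x,y)=\sum_{k,m}\frac{c^n_{km}}{\mu(A^n_k)\mu(A^n_m)}\chi_{A^n_k}(x)\chi_{A^n_m}(y)$; both kernel conditions then reduce to the identities $\sum_k c^n_{km}=\int_X S\chi_{A^n_m}\,d\mu=\mu(A^n_m)$ and $\sum_m c^n_{km}=\int_X S^*\chi_{A^n_k}\,d\mu\le\mu(A^n_k)$, with no appeal to Riesz representation, Fubini, or the auxiliary sets $B_n$. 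Your version buys two things: it sidesteps the joint-measurability question for the Riesz-representing family $\{h_x\}_{x\in X}$, which the paper's proof does not explicitly address, and it makes the semi-doubly stochastic kernel inequality a one-line consequence of countable additivity along the partition. The paper's one-sided version is slightly leaner in that it only needs convergence of $G_{P_n}$ on $S(F)$, and it shows that $G_{P_n}S$ is already an integral operator without the second averaging; but your explicit kernel is arguably the cleaner and more self-contained verification. Your convergence estimate via the splitting $\widetilde S_nf-Sf=G_{P_n}S(G_{P_n}f-f)+(G_{P_n}Sf-Sf)$, using that positive Markov operators are $L^1$-contractions, is sound.
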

 \begin{proof} (The proof is combination the results of \cite{Semi} and \cite{SRS} but after modification in terms of semi-doubly stochastic). From the Proposition \ref{Perfect Example},
 we have 
\[ G_P(f)=\Psi_P(\Phi_P(f))=\sum_{n=1}^\infty (\frac{1 }{\mu(A_n)}\int_{A_n} f d\mu)\chi_{A_n},\qquad \forall f\in L^1(X),\]
is a  semi-doubly stochastic operator on $L^{1}(X)$. Since the composite of two  semi-doubly stochastic operators is semi-doubly stochastic operator, $G_PS$ is a semi-doubly stochastic operator on $L^{1}(X)$, we will show that it is a semi-doubly stochastic integral operator.  Since $F$ is a finite dimensional subspace of $L^{1}(X)$, then the forward image $S(F)$ is a finite dimensional subspace of $L^{1}(X)$ as well.  And then the result follows from Lemma \ref{lemma3.5}.\\ Fix $x\in X$, then there exists a unique $A_k\in P$ such that $x\in A_k$. The boundedness follows from  
$$\begin{array}{rcl}  \vert (G_PS(f))(x)\vert &=& \vert \frac{1}{\mu(A_k)}\int_{A_k}(Sf(t))d\mu(t)\vert \\  &\le &  \frac{1}{\mu(A_k)}\int_{A_k}\vert (Sf)(t)\vert d\mu(t) \\ & \le&  \frac{1}{\mu(A_k)}\int_{X}\vert (Sf)(t)\vert d\mu(t) \\ & \le & \frac{1}{\mu(A_k)}\int_{X}\vert f(x)\vert d\mu(x)\\ & = &\frac{1}{\mu(A_k)}\Vert f\Vert_1.\end{array}$$  Hence by using the Riesz representation theorem, there exist $h_x\in L^{\infty}(X)$ which is a nonnegative function and $(G_PS(f))(x)=\int_X f(y)h_x(y)d\mu$.  Now let $K_{P}(x,y)=h_x(y)$ for all $x,y\in X$.  Then \begin{equation}\label{m1}
     (G_PSf)(x)=\int_X K_{P}(x,y)f(y)d\mu(y)
 \end{equation}  

Since $G_PS$ is a semi-doubly stochastic operator then it preserves the  integral 
$$\int_X ((G_PS)f) d\mu=\int_X fd\mu\quad \forall f\in L^1(Y)$$ and by using Fubini's Theorem
$$\begin{array}{rcl}
\int_X f(x)d\mu(x) &=& \int_X ((G_PS)f)(x) d\mu(x)\\ &=& \int_X\int_X K_{P}(x,y)f(y)d\mu(y)d\mu(x)\\
&=&\int_X f(y) (\int_X K_{P}(x,y)d\mu(x))d\mu(y).
\end{array}$$

Therefore $\int_X K_{P}(x,y)d\mu(x)=1$ for almost all $y\in X$.  
   
  Now suppose $\{B_{n};n\in \mathbb{N}\}$ is an increasing sequence of measurable sets such that  $X=\displaystyle \bigcup_{n\in \mathbb{N}}B_{n}$ and for each $n\in \mathbb{N}$, $\mu(B_{n})<\infty$.
But because of equality of dual pairing as follows
$$ \langle G_PS\chi_{B_{n}},\chi_{A} \rangle = \langle \chi_{B_{n}},((G_PS)^{*}\chi_{A}) \rangle, $$
we have $$
\int_{X}(G_PS\chi_{B_{n}})\chi_{A}\mathrm{d}\mu(x)=\int_{X}\chi_{B_{n}}((G_PS)^{*}\chi_{A})\mathrm{d}\mu(x)\rightarrow\int_{X}(G_PS)^{*}\chi_{A}\mathrm{d}\mu(x).
$$
And by using equation \eqref{m1} for $f=\chi_{B_{n}}$ for each $n$ we have 
\[\int_{X}(G_PS\chi_{B_{n}})\chi_{A}\mathrm{d}\mu(x)=\int_{X}\int_{X} K_P(x,y)\chi_{B_{n}}d\mu(y)\chi_{A} d\mu(x)\]\[=\mu(A)\int_{B_{n}}K_P(x,y)\mathrm{d}\mu(y).\]
 Then if $n\to\infty$ we have 
  \[ \int_{X}(G_PS)^{*}\chi_{A}\mathrm{d}\mu(y)=\mu(A)\int_X K_P(x,y)\mathrm{d}\mu(y).\]
 Since $G_PS$ is semi-doubly stochastic then according to its definition for every measurable subset $A$ with finite measure we have 
 \begin{equation*}\label{E02}
     \displaystyle \int_{X}(G_PS)^{*}\chi_{A}\mathrm{d}\mu\leq\mu(A).  \end{equation*} Then
 $$\mu(A)\int_X K_P(x,y)\mathrm{d}\mu(y)\leq \mu(A),$$
 therefore $\int_X K_{P}(x,y)d\mu(y)\leq 1$ for almost all $x\in X$ and hence $G_PS$ is a semi-doubly stochastic integral operator.
\end{proof}

The following theorem is a combination of two well-known results in finite measurable space by Chong and Day. 
\begin{theorem}\label{Chong-Day}
Let $f,g\in L^1(X,\mu)$ and $\mu(X)<\infty$. Then the following are equivalent:
\begin{itemize}
    \item[a.]
    $f\prec g.$
    \item[b.]
    For all convex functions $\varphi:\mathbb{R}\to \mathbb{R}$,
    $$\int_X\varphi(f) d\mu\leq \int_Y \varphi(g) d\mu.$$
    \item[c.] There exists a doubly stochastic operator $D$ on $L^1(X)$ such that $f=Dg$.
\end{itemize}
\end{theorem}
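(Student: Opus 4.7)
The plan is to run the cycle $(c)\Rightarrow(b)\Rightarrow(a)\Rightarrow(c)$, exploiting the finite-measure coincidence $S\mathcal{D}(L^1(X))=\mathcal{D}(L^1(X))$ (Bahrami et al.) together with Theorems \ref{Chong}, \ref{main} and Corollary \ref{Wcompact} that have already been established.

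For $(c)\Rightarrow(b)$, the first observation is that a doubly stochastic operator $D$ on a finite measure space is unital: applying the dual defining condition of Definition \ref{Semi}(c) to $g=\chi_{E}$ forces $\int_{E}D\mathbf{1}\,d\mu=\mu(E)$ for every measurable $E$, hence $D\mathbf{1}=\mathbf{1}$ a.e. Combined with positivity and linearity, the standard sup-of-affine-minorants argument then gives the pointwise Jensen inequality $\varphi(Dg)\le D\varphi(g)$ for every convex $\varphi:\mathbb{R}\to\mathbb{R}$; integrating both sides and using the Markov identity $\int_{X}D\varphi(g)\,d\mu=\int_{X}\varphi(g)\,d\mu$ yields $(b)$.

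For $(b)\Rightarrow(a)$, I specialize to the family of convex test functions $\varphi_{u}(t)=\max\{t-u,0\}$, $u\in\mathbb{R}$. Theorem \ref{Chong} rewrites $(b)$ as $\int_{u}^{\infty}d_{f}(t)\,dt\le\int_{u}^{\infty}d_{g}(t)\,dt$ for every $u$, and in finite measure a Fubini exchange (the direct analogue of Theorem \ref{Chong2}) converts this to $\int_{0}^{s}f^{\downarrow}\,dm\le\int_{0}^{s}g^{\downarrow}\,dm$ for all $s\in[0,\mu(X)]$. Applying $(b)$ to $\varphi(t)=t$ and $\varphi(t)=-t$ yields $\int_{X}f\,d\mu=\int_{X}g\,d\mu$, which is equality at $s=\mu(X)$; together these give $f\prec g$ in the sense of Definition \ref{DecRe}.

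For $(a)\Rightarrow(c)$, Theorem \ref{main} produces a sequence $S_{n}\in S\mathcal{D}(L^1(X))$ with $S_{n}g\to f$ in $L^{1}$, and by the finite-measure identification each $S_{n}$ is doubly stochastic. To upgrade the approximating sequence to an equality $f=Dg$, I will extract an operator-valued limit. Corollary \ref{Wcompact} makes $\{Tg:T\in\mathcal{D}(L^1(X))\}$ relatively weakly compact for every fixed $g$; choosing a countable dense set $\{h_{k}\}\subset L^{1}(X)$ and applying Corollary \ref{Wcompact} to each $h_{k}$, a diagonal extraction yields a subsequence $S_{n_{j}}$ with $S_{n_{j}}h_{k}\to Dh_{k}$ weakly for every $k$. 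Since each $S_{n}$ is a contraction on $L^{1}$, $D$ extends linearly to a contraction on all of $L^{1}(X)$, and the weak limit preserves positivity and the Markov identity; the dual condition $\int D^{*}g\,d\mu=\int g\,d\mu$ for $g\in L^{\infty}$ is inherited through the identity $\int (S_{n_j}h)g\,d\mu=\int h(S_{n_{j}}^{*}g)\,d\mu$ together with $\int S_{n_{j}}^{*}g\,d\mu=\int g\,d\mu$. Finally, $Dg=f$ because $S_{n}g\to f$ strongly and $S_{n_{j}}g\to Dg$ weakly along the subsequence. The main obstacle is exactly this last step: verifying that the diagonally extracted limit is a bona fide doubly stochastic operator rather than merely a Markov contraction, which requires carefully carrying the $\int T^{*}g\,d\mu=\int g\,d\mu$ condition across the weak operator limit; here the finiteness of $\mu(X)$ is essential because it places $\mathbf{1}\in L^{1}\cap L^{\infty}$ and allows one to test the adjoint against indicator functions in $L^{1}$.
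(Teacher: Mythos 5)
The paper does not actually prove this theorem: it is quoted as a combination of two classical results, with $(a)\Leftrightarrow(b)$ cited to Chong and $(a)\Leftrightarrow(c)$ cited to Day. Your proposal is therefore a genuinely different route --- a self-contained derivation from the paper's own toolkit (Theorem \ref{main}, Theorem \ref{Chong}, Corollary \ref{Wcompact}, and the finite-measure coincidence $S\mathcal{D}=\mathcal{D}$). The individual steps are sound where they apply: the unitality $D\mathbf{1}=\mathbf{1}$ on a finite measure space, the countable-supporting-lines Jensen inequality $\varphi(Dg)\le D\varphi(g)$, the reduction of $(b)$ to the test functions $(t-u)^+$ via Theorem \ref{Chong}, and the Dunford--Pettis/diagonal extraction of a weak operator limit in $(a)\Rightarrow(c)$ (the limit $D$ is positive because the positive cone of $L^1$ is convex and norm-closed, hence weakly closed; $D\mathbf{1}=\mathbf{1}$ passes to the limit trivially since $S_{n_j}\mathbf{1}=\mathbf{1}$ for every $j$; and $Dg=f$ by uniqueness of weak limits). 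What your approach buys is independence from Day's theorem; what the paper's approach buys is coverage of the general signed case with no extra work.

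That last point is where you have a real gap. The theorem is stated for arbitrary $f,g\in L^1(X,\mu)$, but two of the ingredients you invoke --- Theorem \ref{main} and Corollary \ref{Wcompact} --- are stated only for \emph{non-negative} functions, so your $(a)\Rightarrow(c)$ as written proves the implication only for $f,g\ge 0$. The obvious repair, replacing $f,g$ by $f+c,g+c$ and using $D(g+c)=Dg+c$, only works when $f$ and $g$ are bounded below, which a general $L^1$ function need not be; a truncation-and-limit argument would be needed, and you do not supply one. A second, more minor citation issue: in $(b)\Rightarrow(a)$ you appeal to ``the direct analogue of Theorem \ref{Chong2}'' for finite measure, but the paper only records the infinite-measure, non-negative version; the finite-measure equivalence between $\int_u^\infty d_f\le\int_u^\infty d_g$ for all $u$ and $\int_0^s f^\downarrow\le\int_0^s g^\downarrow$ for all $s$ is classical (it is the Legendre-transform identity $\int_0^s f^\downarrow\,dm=\inf_u\bigl(su+\int_X(f-u)^+\,d\mu\bigr)$), but you should either prove it or cite it explicitly rather than gesture at a theorem that does not cover your case.
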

\begin{proof}
The equivalence of $(a)$ and $(b)$ is proved by Chong \cite[Theorem 2.5]{Chong} and the equivalence of $(a)$ and $(c)$ is proved by Day \cite[Theorem 4.9]{Day}.
\end{proof}
Now we can summarize all equivalent conditions for majorization relation in case of $\sigma$-finite measure space. Theorem \ref{main2} as an extension of Day-Chong's result i.e  Theorem \ref{Chong-Day} or Hardy, Littlewood and P{\'o}lya's result to $\sigma$-finite measure space asserts strong relation between majorization based on decreasing rearrangement, semi-doubly stochastic operators, semi-doubly stochastic integral operators and convex functions integral inequality. Also we should mention, there is an open problem in \cite{MM} (the converse of corollary II.13) which the following theorem can solve it by using counting measure on $\mathbb{N}$.

\begin{theorem}\label{main2}
Let $(X,\mu)$ be $\sigma$-finite measure space and non-negative $f,g\in L^1(X,\mu)$. Then the followings are equivalent:
\begin{itemize}
\item[1.] $g\prec f$.
    \item[2.] There exist a sequence $(S_k)_{n=1}^\infty$ of semi-doubly stochastic operators on $L^1$ such that $S_kf\to g$ in $L^1(X)$.
    \item[3.] There exist a sequence $(I_n)_{n=1}^\infty$ of semi-doubly stochastic integral operators on $L^1$ such that $I_nf\to g$ in $L^1(X)$.
    \item[4.] $\int_X \phi(g)d\mu\leq \int_X \phi(f)d\mu$ for all increasing convex function $\phi$ from $\mathbb{R}^+$ to $\mathbb{R}^+$ such that $\phi(0)=0$, and $\int_X g d\mu=\int_X f d\mu$.
    \item[5.] $\int_X (g-u)^+ d\mu\leq \int_X (f-u)^+d\mu$ for each positive real number $u$, and $\int_X g d\mu=\int_X f d\mu$.
\end{itemize} 
\end{theorem}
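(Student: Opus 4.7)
The plan is to establish $(1) \Leftrightarrow (2) \Leftrightarrow (3)$ by operator-theoretic arguments already available, and to close the loop $(1) \Rightarrow (4) \Rightarrow (5) \Rightarrow (1)$ by reducing to distribution-function identities. The equivalence $(1) \Leftrightarrow (2)$ is exactly Theorem~\ref{main}. For $(2) \Rightarrow (3)$, given a sequence $(S_k) \subset S\mathcal{D}(L^1(X))$ with $S_k f \to g$ in $L^1$, I would apply Theorem~\ref{integral op} to the finite-dimensional subspace $F = \mathrm{span}\{f\}$ to produce, for each $k$, a sequence $(I_{k,n})_n$ of semi-doubly stochastic integral operators with $I_{k,n} f \to S_k f$ in $L^1$ as $n \to \infty$. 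A diagonal extraction $I_k := I_{k, n_k}$ with $n_k$ chosen so that $\|I_{k,n_k} f - S_k f\|_1 < 1/k$ gives $I_k f \to g$ in $L^1$. The converse $(3) \Rightarrow (2)$ is automatic since every semi-doubly stochastic integral operator is, by definition, a semi-doubly stochastic operator.

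For the remaining equivalences, first observe that $(4) \Rightarrow (5)$ is obtained by specializing to the test functions $\phi_u(t) = (t-u)^+$ for each $u > 0$, which are increasing, non-negative, convex, and satisfy $\phi_u(0) = 0$; the equal-integral hypothesis is shared between $(4)$ and $(5)$ and requires no further work. For $(5) \Rightarrow (1)$, I would rewrite each side of the inequality in $(5)$ via Theorem~\ref{Chong} as
\[ \int_X (h-u)^+\, d\mu = \int_u^\infty d_h(t)\, dt, \qquad h \in \{f, g\}. \]
Condition $(5)$ thus becomes $\int_u^\infty d_g\, dt \leq \int_u^\infty d_f\, dt$ for every $u > 0$, together with $\int_X g\, d\mu = \int_X f\, d\mu$. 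Theorem~\ref{Chong2} converts this into $\int_0^s g^\downarrow\, dm \leq \int_0^s f^\downarrow\, dm$ for all $s \geq 0$, i.e.\ $g \prec_w f$; combined with equality of total integrals this is precisely $g \prec f$. Finally, $(1) \Rightarrow (4)$ is Theorem~\ref{Chong0} in the infinite-measure case, and in the finite-measure case it is immediate from Theorem~\ref{Chong-Day} since the test class in $(4)$ is contained in that of Theorem~\ref{Chong-Day}(b).

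The main obstacle will be the diagonal extraction in $(2) \Rightarrow (3)$: because Theorem~\ref{integral op} approximates $S_k$ only on a prescribed finite-dimensional subspace rather than uniformly over operators, the rate of approximation for each $k$ must be controlled relative to $\|S_k f - g\|_1$ to ensure $I_k f \to g$. Beyond this bookkeeping the argument is a clean synthesis of Theorems~\ref{main}, \ref{integral op}, \ref{Chong}, \ref{Chong2}, \ref{Chong0}, and \ref{Chong-Day}, with the $\sigma$-finite hypothesis handled by splitting into the finite- and infinite-measure cases only where Chong's results demand the distinction.
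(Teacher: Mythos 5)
Your proposal is correct and follows essentially the same route as the paper: $(1)\Leftrightarrow(2)$ via Theorem~\ref{main}, $(2)\Leftrightarrow(3)$ via Theorem~\ref{integral op} plus the observation that integral operators with semi-doubly stochastic kernels are semi-doubly stochastic operators, $(1)\Leftrightarrow(4)$ via Theorem~\ref{Chong0}, $(4)\Rightarrow(5)$ by testing with $\phi_u(t)=(t-u)^+$, and $(5)\Rightarrow(1)$ via Theorems~\ref{Chong} and~\ref{Chong2}. Your explicit diagonal extraction in $(2)\Rightarrow(3)$ is in fact more careful than the paper's wording; the only place you are slightly less careful is $(3)\Rightarrow(2)$, where ``automatic by definition'' still requires the short Fubini/monotone-convergence check (carried out in the paper) that a semi-doubly stochastic kernel yields an operator satisfying $\int_X A^*\chi_E\,d\mu\leq\mu(E)$.
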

\begin{proof}
$(1)$, $(2)$ and $(4)$ are equivalent by Theorems \ref{main} and \ref{Chong0}.\\ $(2)$ implies $(3)$: Since there exists a sequence of semi-doubly stochastic operators $\{S_k\}_{k=1}^\infty$ which $S_kg\to f$, in $L^1(X)$ by using Theorem \ref{integral op}, for each $k\in\mathbb{N}$, $S_k$ as a semi-doubly stochastic operator  on a finite dimensional subspace $F=\mbox{span}\{g\}$ of $L^1(X)$ can be approximated by semi-doubly stochastic integral operators means that for each $k\in\mathbb{N}$ there exists a sequence of semi-doubly stochastic integral operators $\{I_n\}_{n=1}^\infty$ on $L^{1}(X)$ which converge to $S_k$ on $F$. 
Then we can say there exists a sequence of semi-doubly stochastic integral operators $I_n$ which $I_ng\to f$ in $L^1(x)$.\\
$(3)$ implies $(2)$: we show that each semi-doubly stochastic integral operator is a semi-doubly stochastic operator.\\
Let $A:L^1(X)\to L^1(X)$ be a semi-doubly stochastic integral operator. Then by definition for each $f\in L^1(X)$ defined by a semi-doubly kernel $K(x,y)$ as follows
 $$Ag=\int_{X}K(x,y)g(y)d\mu(y)$$ 
we have to show that $A\in S\mathcal{D}(L^1(X))$ i.e.  for all $f\in L^1(X)$ and for each $E\in \mathcal{A}$ with $\mu(E)<\infty$, 
\begin{equation}\label{EEE1}
  \int_X Af ~d\mu=\int_X f ~d\mu,
\end{equation}
and also
\begin{equation}\label{EEE2}
     \int_X A^*\chi_E ~d\mu\leq \mu(E).
\end{equation}
\eqref{EEE1} is simply obtained from this fact that $\int_X K(x,y)d\mu(x)=1$ for almost all $y\in X$.   
Also \eqref{EEE2} is obtained from this fact that for every measurable subset $E$ with finite measure 
\[\lim_{n\to\infty}\int_{X}(A\chi_{B_{n}})\chi_{E}\mathrm{d}\mu(x)=\int_{X}\chi_{B_{n}}(A^{*}\chi_{E})\mathrm{d}\mu(x)=\int_{X}A^{*}\chi_{E}\mathrm{d}\mu(x),\]
 and since
\[A\chi_{B_{n}}=\int_X K(x,y)\chi_{B_{n}}d\mu(y)\]
by using this fact that $\int_Y K(x,y)d\nu(y)\leq 1$ for almost all $x\in X$, when $n\to \infty$ we obtain \eqref{EEE2}.\\
$(4)$ implies $(5)$ simply by choosing convex function $\phi$ given by $\phi(g)=\max\{g-u,0\}=(g-u)^+$ for each $g\in L^1(X)$. \\ Finally by using Theorems \ref{Chong} and \ref{Chong2}, $(5)$ implies $(1)$.
\end{proof}
\begin{remark}
As you can see in Definition \ref{DecRe}, non-negative property for functions in $L^1(X)$ is a necessary condition for defining majorization based on decreasing rearrangement. But Theorem \ref{main2} allows us to extend the definition of majorization to arbitrary functions which are not necessarily non-negative. 
\end{remark}

\section*{Acknowledgments}
This work was supported by the Department of Mathematical Sciences at the Isfahan University of Technology, Iran.

\bibliographystyle{unsrt}  

\end{document}